\newcommand{\bsqcup}{{\textstyle\bigsqcup}}
\newcommand{\set}[1]{\ensuremath{\{{#1}\}}}
\newcommand{\setx}[2]{\ensuremath{\{#1 \mid #2\}}}
\newcommand{\ident}[1]{\ensuremath{\mathit{#1}}}
\newcommand{\Powerset}[1]{\mathcal{P}({#1})}
\newcommand{\PowersetFin}[1]{\mathcal{P}^{\mathit{fin}}({#1})}
\newcommand{\FnTo}{\ensuremath{\rightarrow}}
\newcommand{\FnToFin}{\ensuremath{\rightharpoonup}}
\newcommand{\domof}{\ensuremath{dom}}
\newcommand{\dom}{\operatorname{dom}}
\newcommand{\seq}[1]{\ensuremath{\bar{#1}}}
\newcommand{\seqvar}[1]{\ensuremath{\langle{#1}\rangle}}
\newcommand{\Rule}[3]{\ensuremath{\inferrule*[left={#1}]{#2}{#3}}}
\newcommand{\rulename}[1]{\textsc{#1}}
\newcommand{\Monoid}{\mathit{Mon}}
\newcommand{\wordoflit}{\ensuremath{\mathsf{lit2word}}}
\newcommand{\monneutral}{[\varepsilon]}
\newcommand{\strcl}[1]{\texttt{#1}}
\newcommand{\Allowed}{\mathsf{Allowed}}
\newcommand{\Strings}{\mathit{Str}}
\newcommand{\FJEUS}{FJEUCS\xspace}
\newcommand{\syntax}[1]{\ensuremath{\mathtt{#1}}}
\newcommand{\Null}{\syntax{null}}
\newcommand{\New}[1]{\syntax{new}\;{#1}}
\newcommand{\Let}[3]{\syntax{let}\;{#1}={#2}\;\syntax{in}\;{#3}}
\newcommand{\IfEqual}[4]{\syntax{if}\;{#1}={#2}\;\syntax{then}\;{#3}\;\syntax{else}\;{#4}}
\newcommand{\Assign}[2]{{#1}\,\syntax{:=}\,{#2}}
\newcommand{\Cast}[2]{(#2)\,#1}
\newcommand{\Str}{\ensuremath{str}}
\newcommand{\LExpr}[2]{[#1]^{#2}}
\newcommand{\this}{\mathit{this}}
\newcommand{\KString}{\syntax{String}}
\newcommand{\KObject}{\syntax{Object}}
\newcommand{\NullType}{\syntax{NullType}}
\newcommand{\Vars}{\mathit{Var}}
\newcommand{\Classes}{\mathit{Cls}}
\newcommand{\Fields}{\mathit{Fld}}
\newcommand{\Methods}{\mathit{Mtd}}
\newcommand{\eMethods}{\mathit{Fn}}
\newcommand{\Exprs}{\mathit{Expr}}
\newcommand{\Sites}{\mathit{Pos}}
\newcommand{\fn}{\mathit{fn}} 
\newcommand{\ar}{\mathit{ar}} 
\newcommand{\subclassOf}{\ensuremath{\preceq}}
\newcommand{\fList}{\mathit{fields}}
\newcommand{\mList}{\mathit{methods}}
\newcommand{\mTable}{\mathit{mtable}}
\newcommand{\Prg}{\ensuremath{P}}
\newcommand{\NullVal}{\mathit{null}}
\newcommand{\Values}{\mathit{Val}}
\newcommand{\Locs}{\mathit{Loc}}
\newcommand{\Objs}{\mathit{Obj}}
\newcommand{\SObjs}{\mathit{SObj}}
\newcommand{\classOfVal}{\mathit{classOf}}
\newcommand{\FSem}[6]{\ensuremath{({#1},{#2}) \vdash {#3} \Downarrow
    {#4},{#5} \;\&\; {#6}}}
\newcommand{\sem}{\mathit{sem}}
\newcommand{\Regions}{\mathit{Reg}}
\newcommand{\Contexts}{\mathit{Ctx}}
\newcommand{\Types}{\mathit{Typ}}
\newcommand{\ATypes}{\mathit{ATyp}}
\newcommand{\TPool}{T} 
\newcommand{\subtypeOf}{\ensuremath{<:}}
\newcommand{\Effects}{\mathit{Eff}}
\newcommand{\msqsubseteq}{\sqsubseteq_{\mathsf{m}}}
\newcommand{\WellFormed}[3]{\ensuremath{{#1}\vdash{#2}:{#3}}}
\newcommand{\MethodTP}[3]{\ensuremath{({#1},{#2},{#3})}}
\newcommand{\SType}[5]{\ensuremath{{#1};{#2}\;\vdash_{\textsf{s}}\;{#3}:\,{#4}\;\&\;{#5}}}
\newcommand{\SemHeap}[2]{\ensuremath{{#1}\vdash{#2}}}
\newcommand{\SemHeapVal}[3]{\ensuremath{{#1}\vdash{#2}:{#3}}}
\newcommand{\HTextends}[2]{\ensuremath{{#1} \sqsupseteq {#2}}}
\newcommand{\WellTyped}[3]{\ensuremath{{#1}\vdash {#2}:{#3}}}
\newcommand{\FType}[4]{\ensuremath{{#1}\;\vdash_{\textsf{d}}\;{#2}:\,{#3}\;\&\;{#4}}}
\newcommand{\AType}[5]{\ensuremath{{#1};{#2}\;\vdash_{\textsf{a}}\;{#3}:\,{#4}\;\&\;{#5}}}
\newcommand{\Lattice}{\mathcal{L}}
\newcommand{\alg}{\mathcal{A}}
\newcommand{\te}{\mathsf{typeff}}
\newcommand{\tep}{\mathsf{typeff}'}
\newcommand{\atoms}{\operatorname{atoms}}
\lstdefinelanguage{myML}{
  backgroundcolor = \color{lightgray!20!},
  morekeywords={match,with,proc,assert,for,foreach,range,do,while,until,break,if,then,else,return,and,or,raise,false,true},
  columns=fullflexible,
  sensitive=true,
  commentstyle = \itshape, 
  morecomment={[l]//},
  mathescape=true,
  basicstyle=\small,
  identifierstyle={\ttfamily},
  stringstyle=\rmfamily,
  literate={<-}{$\leftarrow\ $}{2} {->}{$\rightarrow\ $}{2} {:=}{$\leftarrow\ $}{2}
}
\lstdefinelanguage{fjeus}{
  sensitive=true,
  morecomment=[l]{//},
  morestring=[b]",
  basicstyle=\normalsize,
  stringstyle=\rmfamily,
}
\newcommand{\ls}{\lstinline[language=fjeus,basicstyle=\normalsize]}
\newcommand{\mpar}[1]{}
\newcommand{\ignore}[1]{}
\begin{document}

\title{Enforcing Programming Guidelines with Region~Types and Effects\thanks{This research is funded by the German Research Foundation
  (DFG) under research grant 250888164 (GuideForce).}
}


\author{
 	Serdar Erbatur\inst{1}
 	\and
 	Martin Hofmann\inst{1}
 	\and
 	Eugen Z\u alinescu\inst{2}
}
\institute{
 	Ludwig-Maximilians-Universit\"{a}t,
 	Munich, Bavaria, Germany\\
 	\email{\{serdar.erbatur, hofmann\}@ifi.lmu.edu}
  \and
  Institut f\"ur Informatik, 
  Technische Universit\"at M\"unchen, Germany
  \email{eugen.zalinescu@in.tum.de}
}


\maketitle

\begin{abstract}
We present in this paper a new type and effect system for Java which
can be used to ensure adherence to guidelines for secure web
programming. The system is based on the region and effect system by
Beringer, Grabowski, and Hofmann. 
It improves upon it by being parametrized over an arbitrary guideline 
supplied in the form of
a finite monoid or automaton and a type annotation or mockup code for
external methods. Furthermore, we add a powerful type inference based on
precise interprocedural analysis and provide an implementation in the
Soot framework which has been tested on a number of benchmarks
including large parts of the Stanford SecuriBench.

\end{abstract}

\section{Introduction}
We present in this paper a new type and effect system for Java which
can be used to ensure adherence to guidelines for secure web
programming such as proper sanitization of externally supplied strings
or appropriate authorization prior to access to sensitive data.
Unlike its precursors, the system can be freely configured and in this
way it can guarantee adherence to a whole host of such guidelines.

The type system is based on the region and effect systems given
in~\cite{comlan,fast} but improves upon and extends them in a number
of ways. First, our system is parametrized by an arbitrary monoid
abstracting both string values and sequences of events such as writing
certain strings to files or invoking certain framework
methods.

Second, in~\cite{comlan} heuristic context information was used in two
places: first, in order to index several types for one and the same
method and thus to provide a limited amount of polymorphism and
secondly in order to determine regions for newly allocated objects. As
a side effect this provided for the first time a rigorous
type-theoretic underpinning for context-sensitive and points-to
analyses. The system presented here keeps heuristic, user-dependent
context information for the points-to part, i.e.~to determine regions
for newly allocated objects, but uses precise and fully automatic
interprocedural analysis for method~typings.

Third, we have implemented an automatic type inference for the
system using the Soot framework~\cite{lam2011soot}. This makes our
analysis applicable to actual Java source code via Soot's builtin
translation into control-flow graphs which are sufficiently close to
the academic language Featherweight Java~\cite{IgarashiPW01} extended with updates and
strings (FJEUS) on which our type system and its theoretical analysis
are based. This allowed us to test the type system on a number of
interesting benchmarks including those from the SecuriBench and thus
to practically demonstrate the assertion already made in \cite{comlan}
that region-based type and effect systems achieve the same accuracy as
state-of-the-art context-sensitive pointer and points-to analyses
\cite{ChristensenMS03,CregutA05,Lenherr08,Lhotak06}.

Being formulated as a type system, our analysis enjoys a very clear
semantic foundation in the form of a declarative type system which
consists of fairly simple and natural rules and is sound with respect
to a standard operational semantics. From there we move in three steps
to the actual implementation of type inference. Each step is again
sound with respect to the previous one. These steps are first a
semi-declarative type system in which regions for newly allocated
objects are selected deterministically as a function of program point
and a context abstraction chosen by the user. At this point we may
lose some precision but not soundness because the context abstraction
might have been ill-chosen (this also happens in the case of classical
points-to analysis). We then give an algorithmic type system whose
rules are syntax-directed and can be read as a logic program. This
latter system is sound \emph{and} complete with respect to the
semi-declarative system. The last step, finally, is the actual
implementation using Soot which replaces the logic program implicit in
the algorithmic type system with an actual fixpoint iteration both
over control flow graphs (intraprocedural part) and over method call
graphs (interprocedural part). Again, the algorithm underlying the
implementation is sound and complete with respect to the algorithmic
type system. We thus have semantic soundness of all stages and we can
clearly delineate where precision is lost: first, we have the
declarative type system which formalizes a certain degree of
abstraction, e.g., by treating conditionals as nondeterministic choice
(path-insensitivity)
and by presupposing a finite set of regions.
Second, we have the passage from the declarative to the
semi-declarative system. All other stages are precision-preserving.

Before we start let us illustrate the approach with two simple examples.

\begin{example}\label{ex:intro}
  Consider the following small Java program
  to be subjected to tainted\-ness
  analysis. User input (as obtained by \ls{getString}) is
  assumed to be tainted and should not be given to \ls{putString} as
  input without preprocessing.

\hspace*{2em}
\begin{minipage}{0.5\textwidth}
\begin{lstlisting}[language=java]
class C {
  main() {
    D f1 = new D(getString());
    D f2 = new D("test");
    putString(f2.s); }}
\end{lstlisting}
\end{minipage}
\begin{minipage}{0.45\textwidth}
\begin{lstlisting}[language=java]
class D {
  public String s;
  public D(String s) {
    this.s = s; }}
$\phantom{I}$
\end{lstlisting}
\end{minipage}
To ensure this, our type system will refine \ls{String}
into two types: \ls{String@user} containing ``tainted'' strings and
\ls{String@ok} which contains untainted strings such as literals
from the program text or results from trusted sanitization
functions. We also refine the class \ls{D} using two \emph{regions}
into the refined class types \ls{D@red} and \ls{D@green}. For each
of these we then have refined field and method typings: the~\ls{s}
field of \ls{D@red} objects is typed \ls{String@user} and so is
the parameter of their constructor. The class \ls{D@green} uses
\ls{String@ok} instead. If there are more fields and methods that we
want to differentiate we might need more regions than just two. With
these typings in place, we can then see that the program is correct as
follows: the variable \ls{f1} gets type \ls{D@red} whereas
\ls{f2} gets type \ls{D@green}. Thus, \ls{f2.s} has type
\ls{String@ok} and the external method call is permitted. We notice
that if we had allowed only one region rather than two, i.e. without regions, 
we would be forced to give field  would \ls{s} the type \ls{String@user} in
view of the assignment to~\ls{f1}. We thus see how the regions provide object sensitivity. If, on the
other hand, we had erroneously written \ls{putString(f1.s)} then,
since \ls{f1.s} has type \ls{String@user} a type error would have
resulted no matter how many regions we use. 
\end{example}

Consider now another example in which we need to take into account method 
effects, type casting, and library methods for which source code is not available.
\begin{example}\label{ex:intro2}
  Consider the following method. 
  \begin{lstlisting}[language=java]
void doGet(HttpServletRequest req, HttpServletResponse resp) throws IOException {
  String s1 = req.getParameter("name");
  LinkedList<String> list = new LinkedList<String>();
  list.addLast(s1);
  String s2 = (String) list.getLast();
  PrintWriter writer = resp.getWriter();
  writer.println(s2);                    /* BAD */ }
\end{lstlisting}
In the last line, a possibly tainted string is written unsanitized.
\end{example}

We use two conceptually different ways to handle external
methods whose code is not part of the program being analyzed but comes
from some ambient framework or library: (1) builtin methods
(like \ls{getString}) that always take and return
strings and are given semantically; (2) 
external methods
(like \ls{addLast}) that take and return values
other than strings and are defined with mockup~code.

For the builtin methods we provide their semantic behavior and typing by
hand also taking into account tags representing taintedness, the
action of sanitization functions, or similar. We also specify event
traces produced by them. To this end, we assume an alphabet to
represent tags and event traces, and use finite automata and monoids
to obtain a finite abstraction of the event traces produced by the
program. This eventually allows us to ensure that the set of traces
produced by a program will be accepted by a given policy automaton.

Defined external methods are altogether different.  We use
mockup code to represent their implementation, so as to obtain a model
of the part of the ambient framework that is relevant to the analysis.
Mockup code is thus analyzed along with the user code. 
For the given example, we use the following mockup code for linked
lists, where we want to impose the abstraction that all their
entries are treated the same, e.g., if one entry is tainted then all
entries count as tainted.

\noindent
\begin{minipage}{0.4\textwidth}
  \begin{lstlisting}[language=java]
class LinkedList { 
  Object o; 
  LinkedList(Object e) { o = e; } 
  \end{lstlisting}
\end{minipage}
\begin{minipage}{0.55\textwidth}
\begin{lstlisting}

  boolean addLast(Object e) { o = e; return true; } 
  Object getLast(int index) { return o; } ... }
\end{lstlisting}
\end{minipage}

\noindent
The following mockup code implements the classes
\ls{HttpServletRequest}, etc.~and their relevant methods in terms of
the builtin methods \ls{getString} and~\ls{putString}.
\begin{lstlisting}[language=java]
class HttpServletRequest { String getParameter(String s) { return getString(); } }
class HttpServletResponse { PrintWriter getWriter() { return new PrintWriter(); } }
class PrintWriter { void println(String s) { return putString(s); } }
\end{lstlisting}

An alternative to mockup code which we had also explored consists of
ascribing refined types to library methods which then have to be
justified manually. However, we found that the expressive power of
mockup code is no less than that of arbitrary types and often
intuitively clearer. This may be compared to the possibility of
approximating stateful behavior of an unknown component with a state
machine or alternatively (that would correspond to the use of manually
justified types) with a temporal logic formula.

Lastly, to handle casts in general we extended FJEUS in~\cite{comlan}
with appropriate semantics and refined type system. Regarding the 
application scope, we must add that our scope is not limited to 
guidelines for strings. Currently we are able to formalize guidelines 
that represent safety properties (see Appendix~\ref{app:auth} for
an application to a guideline for authorization) and further plan 
to extend our approach to liveness and fairness properties.

We stress that our type inference algorithm automatically finds and
checks all field and method typings. There is no need for the user to fill in
these typings. All a user needs to provide is: 
\begin{compactitem}
\item the policy or guideline to be checked; 
\item typings or mockup code for external framework methods; 
\item a context abstraction (usually taken from a few standard ones); 
\item the set of regions to be used (or their number).
\end{compactitem}

\section{Formalizing Programming Guidelines}
\label{sec:policy}

In order to formalize a given guideline to be enforced we select a
finite alphabet~$\Sigma$ to represent string \emph{tags},
i.e.~annotations that denote some extra information about strings,
like taintedness, input kind etc.  This same alphabet is also used to
represent \emph{events} generated during the program execution, like
writing to a file or the invocation of external methods.
One could use different alphabets for tags and events but for
simplicity we chose not to do so.

We also assume an infinite set~$\Strings$ of \emph{string literals},
and a function $\wordoflit:\Strings\to\Sigma^*$ that specifies the tag
word $w\in\Sigma^*$ for a given string literal. 

The operational semantics is then instrumented so that string literals
are always paired with their tag words in~$\Sigma^*$ and evaluation of
an expression always results in an event trace also represented as a
finite word over $\Sigma$.

Next, we require a finite monoid $\Monoid$ and a homomorphism
$[-]:\Sigma^*\rightarrow \Monoid$ providing a finite abstraction of
string tags and event traces. 
We let $\Effects := \Powerset{\Monoid}$ and use this abbreviation when
referring to event trace abstractions, in contrast to string tag
abstractions.
We single out a subset
$\Allowed \subseteq \Monoid$ consisting of the event traces that are
allowed under the guideline to be formalized.

Then, we need to specify a collection of builtin methods such as
\ls{getString} and \ls{putString} all of
which are static and only take string parameters and return a
string. We use $\fn$ to range over these builtin methods.

For each $n$-ary builtin method  $\fn$ we assume given a function
\[
\sem(\fn) : (\Strings\times \Sigma^*)^n \FnTo \mathcal{P}(\Strings\times \Sigma^* \times \Sigma^*).
\]
The intuition is that when
$(t,w,w')\in\sem(\fn)((t_1,w_1),\dots,(t_n,w_n))$ then this means that
the invocation of the builtin method $\fn$ on string arguments
$t_1,\dots,t_n$ tagged as $w_1,\dots,w_n$ can yield result string $t$
tagged as $w$ and leave the event trace $w'$.
The nondeterminism, i.e.~the fact that $\sem(\fn)$ returns a set of
results allows us for instance to represent user input,
incoming requests, or file contents.

Furthermore, we must specify a ``typing'' in the form of a function
\[
M(\fn) : \Monoid^n\rightarrow \mathcal{P}(\Monoid)\times\Effects
\]
such that whenever $t_1, \ldots, t_n\in\Strings$ and
$w_1,\dots,w_n\in\Sigma^*$ and $(t,w,w')\in$
$\sem(\fn)((t_1,w_1),\dots,(t_n,w_n))$, it holds that $[w]\in U$ and
$[w']\in U'$, where $(U,U') = M(\fn)([w_1], \dots, [w_n])$.

In summary, the formalization of a guideline thus comprises:
\begin{compactitem}
\item tag and event alphabet, finite monoid,\footnote{Alternatively,
    the monoid can be generated automatically from the policy
    automaton (or its complement).} allowed subset, and homomorphism,
\item semantic functions and typings for builtin methods,
\item mockup code for other framework components.
\end{compactitem}

\begin{example}\label{ex:guideline}
  Consider the program in Example~\ref{ex:intro}. We set
  $\Sigma=\set{\mathsf{user},\mathsf{ok}}$,
  $\Monoid=\set{\mathsf{T},\mathsf{U}}$ ($\mathsf{T}, \mathsf{U}$ standing for ``tainted''/``untainted'') with $\mathsf{U}\mathsf{U} = \mathsf{U}$ and
  $\mathsf{U}\mathsf{T} = \mathsf{T}\mathsf{U} = \mathsf{T}\mathsf{T} = \mathsf{T}$.
  Also, $[\mathsf{user}]=\mathsf{T}$, $[\mathsf{ok}]=\mathsf{U}$, and
  $\Allowed=\set{\mathsf{U}}$.
  The semantic function $\sem(\strcl{getString})$ returns
  nondeterministically a triple of the form
  $(t,\seqvar{\strcl{user}},\varepsilon)$ with $t$ a string, $\varepsilon$
  the empty trace, and $\seqvar{-}$ denoting sequences. 
  The semantic function $\sem(\strcl{putString})(t,w)$ returns
  the triple $(\mathsf{""},\varepsilon,\seqvar{[w]})$, with empty strings
  representing \ls{void}. 
  The typing of $\strcl{getString}$ is
  $M(\strcl{getString})() = (\set{[\mathsf{user}]},
  \set{[\varepsilon]}) = (\set{\mathsf{T}}, \set{\mathsf{U}})$.  (Note
  that $\mathsf{U}$ is the monoid's neutral element.)  The typing of
  $\strcl{putString}$ is
  $M(\strcl{putString})(u) = (\set{\mathsf{U}}, \set{u})$, for any $u\in\Monoid$.
\end{example}

In Appendix~\ref{app:taint} we give an example of a more elaborate
guideline for taintedness analysis, which also takes into account
sanitizing functions.
%

\section{Featherweight Java with Updates, Casts, and Strings}
\label{sec:fjeus}

\FJEUS is a formalized and downsized object-oriented language that
captures those aspects of Java that are interesting for our analysis:
objects with imperative field updates, and strings.  The language is
an extension of FJEUS~\cite{fast} with casts, which itself extends
FJEU~\cite{HofmannJ06} with strings, which itself extends
Featherweight Java~\cite{IgarashiPW01} with side effects on a
heap.

\subsection{Syntax}

The following table summarizes the (infinite) abstract identifier sets
in \FJEUS, the meta-variables we use to range over them, and the
syntax of expressions. Notice that we only allow variables in various
places, e.g., field access so as to simplify the metatheory. By using
let-bindings, the effect of nested expressions can be restored (let
normal form).
\begin{gather*}
\begin{array}{>{$}r<{$}rclc>{$}r<{$}rclc>{$}r<{$}rcl}
    variables: & x,y &\in& \Vars &\quad\quad\quad &  
       fields:    & f &\in& \Fields &\quad\quad\quad &
    string literals: & \Str & \in & \Strings\\
      classes:   & C,D &\in& \Classes && 
      methods:   & m &\in& \Methods &&
  builtin methods: & \fn &\in& \eMethods\\
\end{array}
\\
\begin{array}{rcl}
  \Exprs \ni e & ::= & x \mid \Let{x}{e_1}{e_2} \mid
            \IfEqual{x_1}{x_2}{e_1}{e_2} \mid \Null \mid \New C\\
         && \mid \Cast{e}{C} \mid x.f \mid \Assign{x_1.f}{x_2} \mid 
         x.m(\seq y) \mid \fn(\seq y) \mid
          "\Str" \mid x_1 + x_2 
\end{array}
\end{gather*}
We assume that $\Classes$ contains three distinguished elements,
namely $\KObject$, $\KString$, and $\NullType$.\footnote{In Java, the
  $\NullType$ is the type of the expression $\Null$, see
  \url{https://docs.oracle.com/javase/specs/jls/se7/html/jls-4.html\#jls-4.1}.}
In $\New C$ expressions, we require that $C\neq\KString$ and $C\neq\NullType$. 
Also, in $\Cast{e}{C}$ expressions, $C\neq\NullType$.

An \FJEUS program over the fixed set of builtin methods is defined by the following relations and functions.
\begin{displaymath}
  \begin{array}{>{$}r<{$}rcl}
    subclass relation: & 
      \prec & \in & \PowersetFin{\Classes \times \Classes}\\
    field list:  & 
      \fList & \in & \Classes \FnTo \PowersetFin\Fields\\
    method list: & 
      \mList & \in & \Classes \FnTo \PowersetFin\Methods\\
    method table: & 
      \mTable & \in & \Classes \times \Methods \FnToFin \Exprs\\
    \FJEUS program: & 
      \Prg & = & (\prec,\fList,\mList,\mTable)\\
  \end{array}
\end{displaymath}

\FJEUS is a language with nominal subtyping: $D\prec C$ means $D$ is
an immediate subclass of $C$.  The relation is well-formed if, when
restricted to $\Classes\setminus\{\NullType\}$ it is a tree successor
relation with root $\KObject$; thus, multiple inheritance is not
allowed. We write $\subclassOf$ for the reflexive and transitive
closure of $\prec$.  We also require $\NullType\subclassOf C$
for all $C\in \Classes$ and that $\KString$ is neither a subclass nor a
superclass of any proper class (other than $\NullType, \KObject, \KString$).

The functions $\fList$ and $\mList$ describe for each class~$C$ which
fields and method objects of that class have.  The functions are
well-formed if for all classes~$C$ and~$D$ such that $D\subclassOf C$,
$\fList(C)\subseteq \fList(D)$ and $\mList(C)\subseteq \mList(D)$,
i.e.\ classes inherit fields and methods from their superclasses. 
For $C\in\set{\NullType,\KString}$ we require that
$\fList(C)=\mList(C)=\emptyset$.
A method table $\mTable$ gives for each class and each method
identifier its implementation, i.e.\ the \FJEUS expression that forms
the method's body.

A method table is well-formed if the entry
$\mTable(C,m)$ is defined for all $m\in\mList(C)$.  
An implementation may be overridden in subclasses for the same number
of formal parameters. For simplicity, we do not include overloading.
We assume that the formal argument variables of a method $m$ are named
$x^m_1$, $x^m_2$, etc., besides the implicit and reserved variable
$\this$.  Only these variables may occur freely in the body of
$m$. The number of arguments of~$m$ is denoted~$\ar(m)$.

A class table $(F_0,M_0)$ models \FJEUS standard type
system, where types are simply classes.  The \emph{field typing} $F_0
: (\Classes \times \Fields) \FnToFin \Classes$ assigns to each class
$C$ and each field $f\in\mathit{fields}(C)$ the class of the field,
which is required to be invariant with respect to subclasses of $C$.
The \emph{method typing} $M_0 : (\Classes \times \Methods) \FnToFin
\Classes^*\times\Classes$ assigns to each class $C$ and each
method $m\in\mathit{methods}(C)$ a \emph{method type}, which specifies
the classes of the formal argument variables and of the result value.
%

We make explicit the notion of program point by annotating expressions
with \emph{expression labels} $i\in\Sites$: we write $[e]^i$ for
\FJEUS expressions, where $e$ is defined as before.  An \FJEUS program
is well-formed if each expression label $i$ appears at most once in
it.  In the following, we only consider well-formed programs, and
simply write $e$ instead $\LExpr{e}{i}$ if the expression label $i$ is
not important.


\subsection{Semantics}
A state consists of a store (variable environment or stack) and a heap
(memory).  Stores map variables to values, while heaps map locations
to objects.  The only kinds of values in \FJEUS are object locations
and $\NullVal$.  We distinguish two kinds of objects: \emph{ordinary
  objects} contain a class identifier and a valuation of the fields,
while \emph{string objects} are immutable character sequences tagged
with a word over the alphabet $\Sigma$.  The following table
summarizes this state model.
\begin{gather*}
  \begin{array}{>{$}r<{$}rclc>{$}r<{$}rcl}
    locations: & l &\in& \Locs  & \;\;\; &
    stores: & s & \in & \Vars \FnToFin \Values\\
    values: & v & \in & \Values = \Locs \uplus {\{\NullVal\}} &&
    heaps:  & h & \in & \Locs \FnToFin \Objs \uplus \SObjs\\
    string objects: & & & \SObjs = \Strings  \times \Sigma^*  &&
    objects:& & & \Objs = \Classes \times (\Fields \FnToFin \Values)\\  
  \end{array}
\end{gather*}

The \FJEUS semantics is defined as a big-step relation
$\FSem{s}{h}{e}{v}{h'}{w}$, which means that, in store $s$ and
heap~$h$, the expression $e$ evaluates to the value $v$ and modifies
the heap to $h'$, generating the event trace $w\in\Sigma^*$.  The
operational semantics rules can be found in Appendix~\ref{app:opsem}.

\section{Region-based Type and Effect Systems}

\subsection{Refined Types, Effects, and Type System Parameters}

\subsubsection{Refined Types}\label{sec:ref-types}

We assume a finite set~$\Regions$ of \emph{regions}, with $\Monoid\subseteq\Regions$.
We refine the standard object types by annotating objects with sets of regions:
\[
    \Types \ni \tau,\sigma \; ::= \; C_R
    \quad\quad\quad \text{where $C\in\Classes, R\subseteq\Regions$}
\]
such that if $C=\KString$ then $R\subseteq\Monoid$.

A value typed with $C_R$, with $R\cap\Monoid=\emptyset$,
intuitively means that it is a location pointing to an ordinary object of class
$C$ (or a subclass of $C$), and this location is abstracted to a
region $r\in R$, but no other region.
A value typed with $\KString_U$ (or $\KObject_U$ with
$U\subseteq\Monoid$) intuitively means that it is a location that
refers to a string object that is tagged with a word $w$ such that
$[w]\in U$.  We use subsets of~$\Regions$ rather than single elements
to account for joining branches of conditionals (including the
conditionals implicit in dynamic dispatch).

Since region sets are an over-approximation of the possible locations
where a non-string object resides, and string annotations are an
over-approximation of the tags, we define a subtyping
relation $\subtypeOf$ based on set inclusion:
\begin{mathpar}
  \Rule{}
       {C\subclassOf D \\ R\subseteq S}
       {C_R\subtypeOf D_S}
\end{mathpar}
The subtyping relation is extended to type sequences as expected:
$\seq\sigma \subtypeOf \seq\tau$ iff $|\seq\sigma|=|\seq\tau|$ and
$\sigma_i\subtypeOf \tau_i$, for all
$i\in\set{1,\ldots,|\seq\sigma|}$.

If $R$ is a singleton we call refined types $C_R$ 
\emph{atomic (refined) types} and denote the set of atomic types by
$\ATypes$. We often write~$C_r$ instead of $C_{\set{r}}$.
For a refined type $C_R$, let
$\atoms(C_R) := \setx{C_r}{r\in R}$. For a sequence
$\bar\tau=(\tau_1,\dots\tau_k)$ of refined types, let
$\atoms(\bar\tau) :=
\setx{(\sigma_1,\dots,\sigma_k)}{\sigma_i\in\atoms(\tau_i), \text{ for
    each $i$ with $1\leq i\leq k$}}$.

\subsubsection{Type and Effect Lattice}\label{sec:types-and-effects}
We lift the subtyping relation to include effects as well. 
We define the partial order $\sqsubseteq$ on
$\Lattice := \Types \times \Effects$
by $(C_{R}, U) \sqsubseteq (C'_{R'}, U')$ if and only if
$C_{R} \subtypeOf C'_{R'}$ and $U\subseteq U'$, for any
$C, C'\in \Classes$, $R, R'\subseteq\Regions$, and
$U, U' \subseteq \Monoid$.
Given two refined types $C_R$ and $D_S$, we define their \emph{join}
as $C_R\sqcup D_S=E_{R\cup S}$ where $E$ is the smallest (wrt
$\preceq$) common superclass\footnote{Note that such a class always
exists, as $C\preceq\KObject$, for any $C\in\Classes$.} of~$C$
and~$D$.
Given two elements $\ell,\ell'\in \Lattice$ with $\ell = (C_{R}, U)$ and
$\ell' = (C'_{R'}, U')$, we define their \emph{join}, denoted
$\ell\sqcup\ell'$, by $(C_R\sqcup C'_{R'},U\cup U')$.
Thus $(\Types,\sqcup)$ and $(\Lattice,\sqcup)$ are join-/upper-semilattices.
Given $T=\set{\tau_1,\dots,\tau_n}\subseteq\Types$ for some $n\geq 0$,
we denote by $\sqcup T$ the type
$\tau_1\sqcup \tau_2\sqcup\dots\sqcup\tau_n$, where by convention
$\sqcup T = \KObject_\Regions$ when $T=\emptyset$. For
$L\subseteq\Lattice$, the notation $\sqcup L$ is defined similarly.

\subsubsection{Parameters of the Type and Effect System}

The following table summarizes the parameters of our
system: a set of regions, a set of contexts, a context transfer
function, and an object abstraction function.
We explain them here briefly. 
\begin{displaymath}
  \begin{array}{>{$}r<{$}rcl}
    Regions (finite): & r,s,t& \in & \Regions\\
    Contexts (finite): & z & \in & \Contexts\\
    Context transfer function: & \phi & \in & \Contexts \times \Classes \times
    \Regions \times \Methods \times \Sites \FnTo \Contexts \\
    Object abstraction function: & \psi &  \in & \Contexts \times \Sites \FnTo \Regions\\
  \end{array}
\end{displaymath}

Regions are already defined in
Section~\ref{sec:types-and-effects}. They represent abstract memory
locations.  Each region stands for zero or more concrete locations.
Different regions represent disjoint sets of concrete locations, hence
they partition or color the memory.  Two pointers to different regions
can therefore never alias.
Thus the type system serves also as a unifying calculus for pointer
analysis.

Let us assume that we have the \emph{call graph} of a program.  A method
$m$ is then represented by a node in this graph, and a \emph{context}
corresponds to a possible path that leads to the node.  The
\emph{finite} set $\Contexts$ abstracts these paths. For example, it
can be chosen to consists of all locations in the program code or the
latter together with, say, the last 3 method calls on the stack
(3CFA). The meaning of these contexts is given by the functions $\phi$
and $\psi$ which we explain next.  The \emph{context transfer
  function} $\phi$ represents the edges in the abstract call graph. It
selects a context for the callee based on the caller's context, the
class of the receiver object, its region, the method name, and the
call site.
An \emph{object abstraction function} $\psi$ assigns a
region for a new object, given the allocation site and the current
method context. Notice that $\Contexts$ is an arbitrary finite set and
$\phi$, $\psi$ are arbitrary functions. Their choice does not affect
soundness but of course accuracy of the analysis. 
In \cite{Sma2011} a similar 
factorization of $\phi$ and $\psi$ for callee contexts and object
allocation names is presented.

\subsection{Declarative Type System}

As said earlier, the declarative type and effect system is general in the sense
that it produces method typings without considering contexts.
The method typings along with effects are computed 
with regard to only method signatures and the associated region information.
In addition, new objects are assigned arbitrary regions 
in a context-insensitive manner.   

The typing judgment takes the form $\FType{\Gamma}{e}{\tau}{U}$,
where~$e$ is an expression, the \emph{variable context} (\emph{store
  typing}) $\Gamma: \Vars \FnToFin \Types$ maps variables (at least
those in~$e$) to types, $\tau$~is a type, and~$U$ is
an element of $\Effects$.
The meaning is that, if the values of the variables comply
with~$\Gamma$ and the evaluation of $e$ terminates successfully, then
the result complies with~$\tau$, and the event trace generated during
this evaluation belongs to one of the equivalence classes in~$U$.
%
In particular, if $U\subseteq\Allowed$ then~$e$ adheres to the
guideline. It suffices to perform this latter check for an entry point
such as (the body of) a main method.

From a theoretical point of view, the declarative type system forms
the basis of our analysis. Once we prove its soundness
w.r.t.~operational semantics, the soundness of the semi-declarative
and algorithmic systems follows directly.

\subsubsection{Class Tables}\label{sec:ctables_decl}

A declarative class table $(\TPool_\mathsf{d},F,M_\mathsf{d})$ fixes a set of
$\TPool_\mathsf{d}\subseteq \ATypes$ of relevant atomic refined types and
models \FJEUS's class member types declaratively.  
The set $\TPool_\mathsf{d}$ is required to be closed under "supertyping", that
is, for any $C_r\in\TPool_\mathsf{d}$ and $D\in\Classes$ with $C\prec D$, we
have that $D_r\in\TPool_\mathsf{d}$.
One can often assume that the set~$\TPool_\mathsf{d}$ of relevant types
contains all types, i.e.~$\TPool_\mathsf{d} = \ATypes$. However, when we take
$\TPool_\mathsf{d}\subsetneq\ATypes$, by having $C\prec D$, $C_r\notin\TPool$,
and $D_r\in\TPool$, there is an implicit promise that an object with
type $D_r$ is never an object of type $C_r$ that has just been upcast.
The usefulness of this feature is illustrated in
Appendix~\ref{app:fjsec}.

The \emph{field typing} $F : (\Fields \times \ATypes) \FnToFin \Types$
assigns to each class $C$, region~$r$, and field $f\in\fList(C)$ the
type $F(f,C_r)$ of the field. The type is required to be invariant
with respect to subclasses of $C$. More formally, a field typing~$F$
is \emph{well-formed} if $F(f,D_r) = F(f,C_r)$, for all classes $C$,
subclasses $D\subclassOf C$, regions~$r$ with $D_r\in\TPool_\mathsf{d}$, and
fields $f\in\fList(C)$.  
For simplicity, in contrast to~\cite{comlan}, we do not use covariant
get-types and contravariant set-types for fields.

The \emph{declarative method typing} $M_\mathsf{d}: (\Methods \times \ATypes)
\FnToFin \Powerset{\Types^* \times \Types \times \Effects}$ assigns to
each class~$C$, region~$r$, and method~$m\in\mList(C)$, a set $M_\mathsf{d}(m,
C_r)$ of tuples $(\bar\sigma,\tau,U)$, where $\seq{\sigma}$ is a
sequence of atomic refined types for the methods' arguments, $\tau$ is
the refined type of the result value, and $U$ are the possible effects
of the method.
Every overriding method should be contravariant in the argument types,
covariant in the result class, and have a smaller effect
set. Formally, $M_\mathsf{d}$ is \emph{well-formed} if for all classes $C$,
subclasses $C'\preceq C$, regions $r$ with $C'_r\in\TPool_\mathsf{d}$, and
methods~$m\in\mList(C)$, it holds that
\[
\forall (\seq{\sigma}, \tau, U) \in M_\mathsf{d}(m, C_r). \; \exists (\seq{\sigma}', \tau', U')
\in M_\mathsf{d}(m, C'_r). \; (\seq{\sigma}', \tau', U') \msqsubseteq (\seq{\sigma}, \tau, U)  
\]
where we lift the partial order $\sqsubseteq$ to
$\Types^* \times \Types \times \Effects$ using
$(\bar\sigma', \tau',U') \msqsubseteq (\bar\sigma, \tau, U) $ iff
$\bar\sigma \subtypeOf \bar\sigma'$ and
$(\tau', U') \sqsubseteq (\tau, U)$.

Finally, all types occurring in the field and methods typings are
relevant. Formally, $C_r\in\TPool_\mathsf{d}$, for any~$C_r$ occurring in the
domain of $F$ or $M_\mathsf{d}$, and for any $C_r\in\atoms(C_R)$ with $C_R$
occurring in the image of $F$ or~$M_\mathsf{d}$.

\subsubsection{Type System}

For space reasons, we only present three of the type rules, given in
Figure~\ref{fig:type-system-declarative-selected}. The complete rules
are given in Appendix~\ref{app:decl}.
In the rule \rulename{TD-New}, we choose an arbitrary region as the
abstract location of the object allocated by this expression, as long
as the respective type is relevant.
For method calls, \rulename{TD-Invoke} requires that for all regions
$r\in R$ where the receiver object~$x$ may reside, there exists an 
entry in $M_\mathsf{d}$ for the called method and its class such that 
the resulting type and effect is suitable for the given argument
types and the expected result type and effect.
In the rule \rulename{TD-Builtin} we obtain the refined type of the
string returned by a call to the builtin method~$\fn$, by calling the
builtin method typing $M(\fn)$ on the tag abstractions of $\fn$'s
arguments.
Note that 
also denote by~$\sqsubseteq$ the
partial order over $\Powerset{\Monoid} \times \Effects$ defined by:
$(R', U') \sqsubseteq (R, U) $ iff $R' \subseteq R$ and
$U' \subseteq U$.

\begin{figure}[t]
\begin{mathpar}
  \Rule{TD-New}
  {C_{r} \in \TPool_\mathsf{d}}
  {\FType{\Gamma}{\New{C}}{C_{\set{r}}}{\set{\monneutral}}}
  \and
  \Rule{TD-Invoke}
  {\text{for all $r\in R$,
      there is $(\bar\sigma',\tau',U')\in M_\mathsf{d}(m,C_r)$}\\
      \text{such that $(\bar\sigma',\tau',U') \msqsubseteq \MethodTP{\seq\sigma}{\tau}{U}$}}
  {\FType{\Gamma,x:C_R,\seq y:\seq{\sigma}}{x.m(\seq y)}{\tau}{U}}
    \and
    \Rule{TD-Builtin}
         {\ar(\fn)=n \quad \text{$\Gamma(x_1) = \KString_{U_1}, \ldots, 
             \Gamma(x_n) = \KString_{U_n}$} \\\\  
           \text{$M(\fn) (u_1, \ldots, u_n) \sqsubseteq (U, U')$, for all $u_1 \in U_1, \ldots, u_n \in U_n$}}
         {\FType{\Gamma}{\fn(x_1, \dots, x_n)}{\KString_U}{U'}}
\end{mathpar}
\caption{Selected rules of the declarative type system.}
\label{fig:type-system-declarative-selected}
\end{figure}

An \FJEUS program $\Prg=(\prec, \fList, \mList, \mTable)$ is
\emph{well-typed} with respect to the class table $(\TPool_\mathsf{d},F,M_\mathsf{d})$
if for all classes~$C$, regions~$r$, methods~$m\in \mList(C)$, and
tuples $\MethodTP{\seq\sigma}{\tau}{U} \in M_\mathsf{d}(m, C_r)$, the judgment
$\FType{\Gamma}{\mTable(C,m)}{\tau}{U}$ can be derived with
$\Gamma = [\this\mapsto C_{r}] \cup
[x^m_i\mapsto\sigma_i]_{i\in\{1,\ldots,\ar(m)\}}$.

\subsubsection{Type System Soundness}
\label{sec:InterpretationAndSoundnessProof}

We state next the guarantees provided by the type system,
namely that if $\FType{\Gamma}{e}{\tau}{U}$ can be derived and
$U\subseteq\textsf{Allowed}$, then any event trace of the
expression~$e$ is allowed by the guideline.
See Appendix~\ref{app:proof} for a more general statement of
the soundness theorem and its proof.

\begin{theorem}[Soundness]\label{thm:decl-sound-shortened}
  Let $\Prg$ be a well-typed program and $e$ an expression with no
  free variables.
  If $\FType{\Gamma}{e}{\tau}{U}$ and 
  $\FSem{s}{h}{e}{v}{h'}{w}$, for some $\tau$, $U$, $v$, $h'$,
  and $w$, and with $\Gamma$, $s$, and $h$ the empty mappings, then
  $[w]\in U$.
\end{theorem}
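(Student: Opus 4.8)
The plan is to prove a stronger, genuinely inductive invariant, since the bare statement about empty store and heap cannot be pushed through an induction: evaluating method bodies and let-bodies happens under non-empty states, and one needs hypotheses relating the runtime state to the typing context. I would first fix a \emph{region assignment} $\rho : \Locs \FnToFin \Regions$ recording, for each allocated ordinary object, the region to which it is abstracted, and use it to define two mutually recursive consistency relations. Heap consistency $\SemHeap{\rho}{h}$ asserts that every object's fields satisfy the field typing $F$ at the object's class and region, and that every string object tagged by $w$ carries a refined type $\KString_U$ with $[w]\in U$. Value consistency $\SemHeapVal{h}{v}{\tau}$ (implicitly under $\rho$) asserts that $\NullVal$ inhabits every type, since $\NullType \subclassOf C$; that a location $l$ inhabits $C_R$ when its class is a subclass of $C$ and $\rho(l)\in R$; and that a string location inhabits $\KString_U$ when its tag word $w$ satisfies $[w]\in U$. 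Store consistency with a context $\Gamma$ is then pointwise value consistency.

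\textbf{The generalized theorem.} I would then state: whenever $\FType{\Gamma}{e}{\tau}{U}$ is derivable, the state $(s,h)$ is consistent with $\Gamma$ under some $\rho$ with $\SemHeap{\rho}{h}$, and $\FSem{s}{h}{e}{v}{h'}{w}$ holds, there exists an extended assignment $\rho'\supseteq\rho$ such that $\SemHeap{\rho'}{h'}$, the heap grows monotonically ($\HTextends{h'}{h}$, so existing objects keep their class and region and previously valid typings stay valid), the result is consistent, $\SemHeapVal{h'}{v}{\tau}$ under $\rho'$, and, crucially, $[w]\in U$. The theorem as stated follows immediately: the empty store and heap are vacuously consistent with the empty context under the empty $\rho$, so the conclusion $[w]\in U$ applies.

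\textbf{The induction.} The proof proceeds by induction on the derivation of $\FSem{s}{h}{e}{v}{h'}{w}$, inverting the typing derivation in each case. For trace bookkeeping I would lean on $[-]$ being a homomorphism: whenever evaluation concatenates subtraces $w = w_1 w_2$, the induction hypotheses give $[w_i]\in U_i$, and the relevant typing rule computes an effect set containing the monoid product $[w_1][w_2] = [w_1 w_2]$, so membership transports through each composite rule (let, the implicit sequencing in invocation, and field update). The allocation case \rulename{TD-New} is handled by extending $\rho$ with the fresh location mapped to the region $r$ chosen in the derivation; the new object has all fields $\NullVal$, consistent by the $\NullType$ clause, and the effect $\set{\monneutral}$ matches the empty trace. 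The builtin case \rulename{TD-Builtin} reduces directly to the assumed soundness of $M(\fn)$ relative to $\sem(\fn)$, which guarantees $[w]\in U$ and $[w']\in U'$ for exactly the computed pair $(U,U')$; string concatenation is the degenerate instance where the result tag is the monoid product of the argument tags and the trace is empty.

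\textbf{Main obstacles.} Two cases carry the real weight. The first is method invocation \rulename{TD-Invoke}: I must use well-typedness of the program to obtain a typing of $\mTable(C,m)$ against the entry $(\bar\sigma',\tau',U')\in M_\mathsf{d}(m,C_r)$ selected for the receiver's \emph{runtime} region $r=\rho(l)\in R$, verify that the actual arguments are consistent with $\bar\sigma'$ (using $\bar\sigma \subtypeOf \bar\sigma'$ from $\msqsubseteq$ together with soundness of subtyping for value consistency), apply the induction hypothesis to the body, and propagate the result back along $(\tau',U')\sqsubseteq(\tau,U)$. Soundness of subtyping and of the $\msqsubseteq$ ordering for both values and effects is thus the key supporting lemma. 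The second is preserving $\SemHeap{\rho'}{h'}$ across imperative field update \rulename{TD-Assign}: overwriting field $f$ of the object at $l$ with a value of type $F(f,C_{\rho(l)})$ must leave every other object's consistency intact and invalidate no previously established typing, which relies on invariance of field types under subclassing and on the monotonicity built into $\HTextends{h'}{h}$. The remaining cases (variables, $\Null$, casts, and conditionals treated as nondeterministic choice) are routine inversion followed by an appeal to the induction hypothesis.
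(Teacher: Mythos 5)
Your overall strategy is essentially the paper's own proof: strengthen the statement to a preservation-style invariant relating the runtime state to the typing via a region/heap typing (the paper uses a heap typing $\Pi : \Locs \FnToFin \ATypes$ where you use $\rho$ plus the runtime class), prove by induction that evaluation produces a monotone extension of that typing together with $[w]\in U$, and recover the stated theorem as the trivial instance for empty mappings. Your bookkeeping choices are acceptable variants: the paper inducts on the sum of the depths of the two derivations so as to peel off subsumption, which your inversion-modulo-subtyping accomplishes equally well; and your decision to let value consistency read the class off the heap rather than off $\Pi$ actually makes the downcast case genuinely routine, where the paper needs a separate lemma (its Lemma~\ref{lem:downcast}) to transfer a typing $C_R$ down to the runtime class.

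There is, however, a genuine gap in your invocation case: you ignore dynamic dispatch. The operational semantics evaluates $\mTable(D,m)$ where $h(l)=(D,\_)$ is the \emph{runtime} class of the receiver, and $D$ may be a proper subclass of the static class $C$ appearing in \rulename{TD-Invoke}, overriding $m$ with a different body. Well-typedness of the program applied to the entry $(\bar\sigma',\tau',U')\in M_\mathsf{d}(m,C_r)$ only yields a typing of $\mTable(C,m)$, which is not the code that runs, so your appeal to the induction hypothesis is unjustified whenever $m$ is overridden. Closing this requires two ingredients absent from your sketch: (i) your heap-consistency relation must additionally record \emph{relevance}, i.e., that every ordinary object's atomic type satisfies $D_r\in\TPool_\mathsf{d}$ (this invariant is maintained at allocation exactly because \rulename{TD-New} demands $C_r\in\TPool_\mathsf{d}$); and (ii) the well-formedness condition on the declarative method typing for overriding methods, which, given $D\subclassOf C$ and $D_r\in\TPool_\mathsf{d}$, produces an entry $(\bar\sigma'',\tau'',U'')\in M_\mathsf{d}(m,D_r)$ with $(\bar\sigma'',\tau'',U'')\msqsubseteq(\bar\sigma',\tau',U')$. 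Program well-typedness then types the actually executed body $\mTable(D,m)$ against this entry; the arguments are consistent with $\bar\sigma''$ by transitivity of subtyping, the induction hypothesis applies, and the result propagates back through the two $\msqsubseteq$ steps. Without (i) and (ii) the proof fails on any program in which a subclass overrides an invoked method.
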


\subsection{Semi-declarative Type System}
As already mentioned we rely on heuristic context information in order
to infer regions for newly created objects. That is to say, we use the
user-provided function $\psi$ in order to select a region for the
newly created object based on the position of the statement
(expression label) and the current context which is an abstraction of
the call stack.
Clearly, the use of such arbitrary user-provided decision functions
incurs an unavoidable loss of precision. The semi-declarative
type system which we now present precisely quantifies this loss of
precision. It is still declarative in the sense that types for methods
and classes can be arbitrary (sets of simple types), but it is algorithmic in that regions
for newly created objects are assigned using the function~$\psi$. It
also uses the equally user-provided function~$\phi$ to manage the
context abstractions.
The semi-declarative system is therefore sound
(Theorem~\ref{thm:sdecl-sound}) with respect to the declarative one
and thus also with respect to the operational semantics via
Theorem~\ref{thm:decl-sound-shortened}.

Further down, in Section~\ref{sec:algo} we will then give an
algorithmic type system which can be understood as a type inference
algorithm presented as a logic program. This algorithmic type system
will be shown sound (Theorem~\ref{thm:algo-sound}) \emph{and} complete
(Theorem~\ref{thm:algocomplete}) w.r.t.\ the semi-declarative system.

\subsubsection{Class Tables}\label{sec:ctables_semi}

We define next semi-declarative class tables $(\TPool_\mathsf{s},F,M_\mathsf{s})$.
%
The set of allowed refined types is parametrized by a context, that is,
$\TPool_\mathsf{s}$ is a function from~$\Contexts$ to $\ATypes$. Each set
$\TPool_\mathsf{s}(z)$ is closed under supertyping as in the declarative case.
The field typing~$F$ is as in the fully declarative case.
The \emph{semi-declarative method typing} $M_\mathsf{s}: (\Methods \times
\Contexts \times \ATypes) \FnToFin \Powerset{\Types^* \times \Types
  \times \Effects}$ assigns to each class~$C$, region~$r$, context~$z$,
and method~$m\in\mList(C)$ a set~$M_\mathsf{s}(m,z,C_r)$ of tuples
$(\bar\sigma,\tau,U)$ as in the fully declarative case.
As before, overriding methods have to satisfy the
following condition: for any context~$z$, atomic refined typed
$C'_r\in\TPool_\mathsf{s}(z)$, class~$C$ with $C'\preceq C$, and
method~$m\in\mList(C)$, it holds that
\[
\forall (\seq{\sigma}, \tau, U) \in M_\mathsf{s}(m, z, C_r).\; 
\exists (\seq{\sigma}', \tau', U')
\in M_\mathsf{s}(m, z, C'_r). \; (\seq{\sigma}', \tau', U') \msqsubseteq (\seq{\sigma}, \tau, U).
\]

\subsubsection{Typing Rules}

The parametric typing judgment takes the form
$\SType{\Gamma}{z}{e}{\tau}{U}$, where $\Gamma$, $e$, $\tau$, and $U$
are as for the declarative typing judgment
$\FType{\Gamma}{e}{\tau}{U}$, while $z\in\Contexts$ is a context.
The judgment has the same meaning as before, with the addition that it
is relative to the context~$z$.
The derivation rules are the same as for the declarative system (see
Appendix~\ref{app:decl}, Figure~\ref{fig:type-system-declarative-complete}), with the addition of the
context $z$ in each judgment, except for the two rules given next.
In the rule \rulename{TS-New}, we choose the region specified by~$\psi$
as the abstract location of the object allocated by this expression.
For method calls, \rulename{TS-Invoke} requires that for all regions
$r\in R$ where the receiver object~$x$ may reside, the method typing
in the context selected by~$\phi$ is suitable for the given argument
types and the expected result type and effect.
  \begin{mathpar}
    \Rule{TS-New}
         {r=\psi(z,i) \\ C_{r}\in\TPool_\mathsf{s}(z)}
         {\SType{\Gamma}{z}{\LExpr{\New C}{i}}{C_{r}}{\set{\monneutral}}}
    \and
    \Rule{TS-Invoke}
       {\text{for all $r\in R$,
           there is $(\bar\sigma',\tau',U')\in M_\mathsf{s}(m,z',C_r)$}
         \\
         \text{such that $(\bar\sigma',\tau',U') \msqsubseteq \MethodTP{\seq\sigma}{\tau}{U}$,
         where $z'=\phi(z,C,r,m,i)$}}
       {\SType{\Gamma,x:C_R,\seq y:\seq{\sigma}}{z}{\LExpr{x.m(\seq y)}{i}}{\tau}{U}}
  \end{mathpar}

A program $\Prg=(\prec, \fList, \mList, \mTable)$ is
\emph{well-typed} w.r.t.~the class table $(\TPool_\mathsf{s},F,M_\mathsf{s})$
if for all classes~$C$, contexts~$z$, regions~$r$,
methods~$m\in \mList(C)$, and tuples
$\MethodTP{\seq\sigma}{\tau}{U} \in M_\mathsf{s}(m,z,C_r)$, the judgment
$\SType{\Gamma}{z}{\mTable(C,m)}{\tau}{U}$ can be derived with
$\Gamma = [\this\mapsto C_{r}] \cup
[x^m_i\mapsto\sigma_i]_{i\in{\{1,\ldots,\ar(m)\}}}$.

Soundness of the semi-declarative type system follows directly from
the soundness of declarative type system. That is, since the rules of
semi-declarative system are obtained by adding context information to
the rules of the declarative system, the soundness result from the
previous subsection carries over here.

\begin{theorem}[Soundness]\label{thm:sdecl-sound}
  If an $\FJEUS$ program is well-typed with respect to a
  semi-declarative class table $(\TPool_\mathsf{s}, F, M_\mathsf{s})$, then it is
  well-typed with respect to the corresponding declarative class table
  $(\TPool_\mathsf{d}, F, M_\mathsf{d})$, where
  $\TPool_\mathsf{d} := \bigcup_{z\in\Contexts}\TPool_\mathsf{s}(z)$ and
  $M_\mathsf{d}(m,C_r) := \bigcup_{z\in\Contexts}M_\mathsf{s}(m,z,C_r)$.
\end{theorem}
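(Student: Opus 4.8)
The plan is to reduce the statement to a single \emph{context-erasure lemma}: for every context $z\in\Contexts$, expression $e$, variable context $\Gamma$, type $\tau$, and effect $U$,
\[
\SType{\Gamma}{z}{e}{\tau}{U}\ \text{(w.r.t.\ $(\TPool_\mathsf{s},F,M_\mathsf{s})$)}\quad\Longrightarrow\quad \FType{\Gamma}{e}{\tau}{U}\ \text{(w.r.t.\ $(\TPool_\mathsf{d},F,M_\mathsf{d})$)}.
\]
Once this lemma is available the theorem is immediate: to witness declarative well-typedness I must derive $\FType{\Gamma}{\mTable(C,m)}{\tau}{U}$ for every $C$, $r$, $m\in\mList(C)$ and every $\MethodTP{\seq\sigma}{\tau}{U}\in M_\mathsf{d}(m,C_r)$, with $\Gamma=[\this\mapsto C_r]\cup[x^m_i\mapsto\sigma_i]_i$. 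By definition $M_\mathsf{d}(m,C_r)=\bigcup_{z}M_\mathsf{s}(m,z,C_r)$, so this tuple lies in $M_\mathsf{s}(m,z,C_r)$ for some context $z$; semi-declarative well-typedness then yields $\SType{\Gamma}{z}{\mTable(C,m)}{\tau}{U}$ for exactly this $\Gamma$ (the variable context is literally the same in both well-typedness definitions), and the erasure lemma converts it to the required declarative judgment.

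I would prove the lemma by structural induction on the semi-declarative derivation. For every rule other than \rulename{TS-New} and \rulename{TS-Invoke}, the semi-declarative rule is, by construction, the corresponding declarative rule with the context $z$ threaded unchanged through conclusion and premises, and with side conditions mentioning only the shared field typing $F$ and the context-independent builtin typings $M(\fn)$. Hence, after applying the induction hypothesis to each sub-derivation and erasing $z$, the matching declarative rule applies verbatim; nothing in the effect bookkeeping or the subtyping side conditions moves. The two genuinely interesting rules are precisely those where $\psi$ and $\phi$ replace a nondeterministic choice by a deterministic one. In \rulename{TS-New} the premises are $r=\psi(z,i)$ and $C_r\in\TPool_\mathsf{s}(z)$; since $\TPool_\mathsf{d}=\bigcup_z\TPool_\mathsf{s}(z)$ we have $C_r\in\TPool_\mathsf{d}$, and \rulename{TD-New} permits \emph{any} region whose type is relevant, so I simply keep the region $\psi(z,i)$ that had already been selected. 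In \rulename{TS-Invoke} the premise supplies, for each $r\in R$, a witness $(\bar\sigma',\tau',U')\in M_\mathsf{s}(m,z',C_r)$ with $z'=\phi(z,C,r,m,i)$ and $(\bar\sigma',\tau',U')\msqsubseteq\MethodTP{\seq\sigma}{\tau}{U}$; because $M_\mathsf{s}(m,z',C_r)\subseteq M_\mathsf{d}(m,C_r)$, the very same witness discharges the premise of \rulename{TD-Invoke}, whose only difference is that it quantifies over all of $M_\mathsf{d}(m,C_r)$ rather than over a context-selected slice. In both cases the user-supplied functions merely \emph{instantiate} existential choices that the declarative rules leave open, so a semi-declarative derivation is, after erasure, already a legal declarative derivation.

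Note that this derivability argument does not use well-formedness of the class tables at all: the derivation rules reference $\TPool_\mathsf{d}$, $F$, $M_\mathsf{d}$ and $M(\fn)$ only as data, and well-formedness enters only in the soundness proof of Theorem~\ref{thm:decl-sound-shortened}. I therefore expect the only delicate point to be not the proof above but the side obligation of checking that $(\TPool_\mathsf{d},F,M_\mathsf{d})$ is itself a legitimate declarative class table, should one insist on this before invoking well-typedness with respect to it. Closure of $\TPool_\mathsf{d}$ under supertyping and invariance of $F$ transfer pointwise from the semi-declarative data and are immediate. The one condition that does not visibly survive the union is the overriding/contravariance requirement on $M_\mathsf{d}$: a tuple in $M_\mathsf{d}(m,C_r)$ and the subclass type $C'_r\in\TPool_\mathsf{d}$ witnessing the condition may a priori originate from \emph{different} contexts, so the per-context guarantee of $M_\mathsf{s}$ need not line up directly. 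I would discharge this using the relevance condition on the semi-declarative table — if $C'_r\notin\TPool_\mathsf{s}(z)$ then $M_\mathsf{s}(m,z,C'_r)$ is empty — to confine a tuple and its witness to a common context $z$, where the semi-declarative overriding condition can be applied and the resulting witness re-exported into $M_\mathsf{d}(m,C'_r)$ via the same union.
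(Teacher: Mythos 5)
Your core erasure argument is correct and is exactly the paper's (very terse) proof: the paper justifies this theorem in two sentences, observing that the semi-declarative rules are the declarative rules with context information threaded through, so that erasing $z$ from a semi-declarative derivation yields a declarative one. Your induction --- with \rulename{TS-New} discharged via $\TPool_\mathsf{s}(z)\subseteq\TPool_\mathsf{d}$, \rulename{TS-Invoke} via $M_\mathsf{s}(m,z',C_r)\subseteq M_\mathsf{d}(m,C_r)$, and the union definition of $M_\mathsf{d}$ used to trace each tuple back to some context in which semi-declarative well-typedness supplies a derivation --- spells this out faithfully, and since ``well-typed'' is defined purely as derivability of the body judgments, it proves the theorem as stated.

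The final paragraph of your proposal, however, contains a genuine error. The relevance condition cannot ``confine a tuple and its witness to a common context.'' Concretely: let $C'\prec C$, let $t\in M_\mathsf{s}(m,z_0,C_r)$, and let $C'_r\in\TPool_\mathsf{s}(z_1)\setminus\TPool_\mathsf{s}(z_0)$ for a second context $z_1$; nothing forbids this, since the sets $\TPool_\mathsf{s}(z)$ are only required to be closed under \emph{supertyping}, so relevance of $C_r$ at $z_0$ says nothing about the subclass type $C'_r$. Then $C'_r\in\TPool_\mathsf{d}$, so declarative well-formedness of $M_\mathsf{d}$ demands a witness $\msqsubseteq t$ in $M_\mathsf{d}(m,C'_r)=\bigcup_z M_\mathsf{s}(m,z,C'_r)$. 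But the semi-declarative overriding condition at $z_0$ is vacuous for this $C'_r$, and at $z_1$ it constrains only the tuples of $M_\mathsf{s}(m,z_1,C_r)$, to which $t$ need not belong; relevance merely yields $M_\mathsf{s}(m,z_0,C'_r)=\emptyset$, which produces no witness at all. So the union table can genuinely fail the declarative overriding condition, and your proposed repair does not go through. This does not affect the theorem as literally stated, but it is not a pedantic point either: the paper's proof of Theorem~\ref{thm:decl-soun} uses well-formedness of $M_\mathsf{d}$ in its \rulename{TD-Invoke} case (to pass from the static class to the dynamic class of the receiver), so composing Theorem~\ref{thm:sdecl-sound} with declarative soundness silently relies on a property that neither your argument nor the paper establishes for the union table. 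A correct treatment would either impose a cross-context overriding condition on $M_\mathsf{s}$ (or require the sets $\TPool_\mathsf{s}(z)$ to agree on the subclass types that occur), or prove operational soundness directly for the semi-declarative system.
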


Regarding the announced lack of completeness with respect to the declarative system, consider e.g.\ the case where $\psi$ returns one and the same region irrespective of context and position. In this case, two newly created objects:
\begin{lstlisting}
      StringBuffer x = new StringBuffer();
      StringBuffer y = new StringBuffer();
\end{lstlisting}
\noindent will be sent to the same region and, e.g., writing an unsanitized
string into~\ls{x} followed by outputting \ls{y} would be
overcautiously considered an error. More interesting examples would
involve a single allocation statement called several
times in different contexts. No matter how fine the abstraction
function is chosen, we can always find a situation where two different
allocations are sent to the same region because the two contexts are
identified by the context abstraction.

\subsection{Algorithmic Type System}\label{sec:algo}

The algorithmic type system is the one we use in 
our analysis. It returns the most precise typings and 
is complete with respect to the semi-declarative system.

\subsubsection{Class Tables}\label{sec:ctables_alg}

We defined next algorithmic class tables $(\TPool_\mathsf{a},F,M_\mathsf{a})$.  The sets
$\TPool_\mathsf{a}(z)$ of relevant refined types per context~$z$, and the field
typing $F$ are as in the semi-declarative case.
The \emph{algorithmic method typing}
$M_\mathsf{a}: (\Methods \times \Contexts \times \ATypes \times \ATypes^*)
\FnToFin \Types \times \Effects$ assigns to each class~$C$, region~$r$, context~$z$,
and method~$m\in\mList(C)$, and sequence~$\seq{\sigma}$ of atomic
refined types for the methods' arguments
(i.e.~$|\seq{\sigma}|=\ar(m)$), a type and effect
value~$M_\mathsf{a}(m,z,C_r,\seq{\sigma})$, which specifies the refined type of
the result value, as well as the possible effects of the method.
Also, as for the other type systems, we require that for any
context~$z$, atomic refined type $C'_r\in\TPool_\mathsf{a}(z)$, class~$C$ with
$C'\preceq C$, region~$r$, and method~$m\in\mList(C)$, and atomic type
sequences $\seq\sigma\subtypeOf\seq\sigma'$ with
$|\seq{\sigma}|=\ar(m)$, it holds that
$M_\mathsf{a}(m,z,C'_r,\seq{\sigma}') \sqsubseteq M_\mathsf{a}(m,z,C_r,\seq{\sigma})$.

\subsubsection{Typing Rules}\label{sec:rules_alg}

The algorithmic typing judgment $\AType{\Gamma}{z}{e}{\tau}{U}$ takes
the same form and has the same meaning as the semi-declarative
typing judgment.
The rules (see Appendix~\ref{app:alg-system},
Figure~\ref{fig:type-system-algorithmic-complete}) are in essence more
specialized versions of the ones in the semi-declarative system. For
instance, in the rule \rulename{TA-Invoke} we take the join of all
types which are computed with respect to all regions in which the
object $x$ may reside and all contexts returned by $\phi$.
\begin{mathpar}
  \Rule{TA-Invoke}
  {(\tau, U) = \bsqcup\setx{M_\mathsf{a}(m, z', C_r, \bar{\sigma})}{r\in R, z'=\phi(z,C,r,m,i)}}
  {\AType{\Gamma,x:C_R,\seq y:\seq{\sigma}}{z}{\LExpr{x.m(\seq y)}{i}}{\tau}{U}}
\end{mathpar}

An \FJEUS program $P=(\prec, \fList, \mList, \mTable)$ is
\emph{well-typed} with respect to the class table $(\TPool_\mathsf{a},F,M_\mathsf{a})$
if for all classes~$C$, contexts~$z$, regions~$r$,
methods~$m\in \mList(C)$, and sequence~$\seq{\sigma}$ of argument
types such that $(m,z, C_r, \seq{\sigma})\in\dom(M_\mathsf{a})$, the judgment
$\AType{\Gamma}{z}{\mTable(C,m)}{\tau}{U}$ can be derived with
$\Gamma = [\this\mapsto C_{r}] \cup
[x^m_i\mapsto\sigma_i]_{i\in{\{1,\ldots,\ar(m)\}}}$, where
$M_\mathsf{a}(m,z, C_r, \seq{\sigma})=(\tau, U)$.

\begin{theorem}[Soundness]\label{thm:algo-sound}
If an \FJEUS program $\Prg$ is well-typed with respect to algorithmic
class table $(\TPool_\mathsf{a}, F, M_\mathsf{a})$, then it is well-typed with respect
to the corresponding semi-declarative class table
$(\TPool_\mathsf{a}, F, M_\mathsf{s})$, where
$$M_\mathsf{s}(m,z,C_r) :=
\setx{(\seq{\sigma},\tau,U)}{(m,z,C_r,\bar{\sigma})\in\dom(M_\mathsf{a}),
  M_\mathsf{a}(m,z,C_r,\bar{\sigma})=(\tau,U)}.$$
\end{theorem}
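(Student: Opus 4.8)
The plan is to reduce the statement to a single structural lemma and then observe that the two well-typedness conditions impose literally the same proof obligations. The bookkeeping fact forced by the definition of $M_\mathsf{s}$ is that a tuple $(\bar\sigma,\tau,U)$ belongs to $M_\mathsf{s}(m,z,C_r)$ exactly when $(m,z,C_r,\bar\sigma)\in\dom(M_\mathsf{a})$ and $M_\mathsf{a}(m,z,C_r,\bar\sigma)=(\tau,U)$. Consequently the family of semi-declarative obligations — one derivation $\SType{\Gamma}{z}{\mTable(C,m)}{\tau}{U}$ per tuple of $M_\mathsf{s}$ — is in bijection with the family of algorithmic obligations — one derivation $\AType{\Gamma}{z}{\mTable(C,m)}{\tau}{U}$ per element of $\dom(M_\mathsf{a})$ — and corresponding obligations carry identical $\Gamma=[\this\mapsto C_r]\cup[x^m_i\mapsto\sigma_i]_i$, $z$, $\tau$ and $U$. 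Thus the theorem reduces to the claim that algorithmic derivability implies semi-declarative derivability: if $\AType{\Gamma}{z}{e}{\tau}{U}$ holds with respect to $M_\mathsf{a}$, then $\SType{\Gamma}{z}{e}{\tau}{U}$ holds with respect to $M_\mathsf{s}$.

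I would prove this claim by induction on the algorithmic derivation. Every rule except \rulename{TA-Invoke} is a sharpened form of a semi-declarative rule that does not consult the method table: \rulename{TA-New} coincides with \rulename{TS-New} (both read the region off $\psi$), and the rules for variables, \syntax{let}, conditionals, casts, field access and update, string literals, concatenation, and builtin calls are structurally the same, differing only in that the algorithmic versions return the join of the results computed for the sub-expressions. In each of these cases the induction hypothesis yields semi-declarative derivations of the premises with their exact algorithmic types, and the slack permitted by the semi-declarative rules (subsumption, e.g.\ the join already built into the premise of \rulename{TD-Builtin}, or the freedom in combining the two branches of a conditional) lets me weaken these premises up to the join and reapply the matching semi-declarative rule, reproducing the same conclusion $(\tau,U)$.

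The only substantive case is \rulename{TA-Invoke}, where $(\tau,U)=\bigsqcup\setx{M_\mathsf{a}(m,z',C_r,\bar\sigma)}{r\in R,\ z'=\phi(z,C,r,m,i)}$. For each $r\in R$ write $z'=\phi(z,C,r,m,i)$ and $(\tau_r,U_r)=M_\mathsf{a}(m,z',C_r,\bar\sigma)$; by the definition of $M_\mathsf{s}$ the tuple $(\bar\sigma,\tau_r,U_r)$ lies in $M_\mathsf{s}(m,z',C_r)$, and because $(\tau,U)$ is the join we have $(\tau_r,U_r)\sqsubseteq(\tau,U)$, whence $(\bar\sigma,\tau_r,U_r)\msqsubseteq(\bar\sigma,\tau,U)$ — the argument component is unchanged, so the contravariance requirement is met by reflexivity of $\subtypeOf$. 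These tuples are precisely the witnesses required by the premise of \rulename{TS-Invoke}, which therefore derives $\SType{\Gamma,x:C_R,\bar y:\bar\sigma}{z}{\LExpr{x.m(\bar y)}{i}}{\tau}{U}$; the degenerate case $R=\emptyset$ is immediate as the premise is then vacuous. I would separately verify that the induced table $(\TPool_\mathsf{a},F,M_\mathsf{s})$ is well-formed: given $(\bar\sigma,\tau,U)\in M_\mathsf{s}(m,z,C_r)$ and $C'\preceq C$ with $C'_r\in\TPool_\mathsf{a}(z)$, the monotonicity condition on $M_\mathsf{a}$ (instantiated at $\bar\sigma\subtypeOf\bar\sigma$) gives $M_\mathsf{a}(m,z,C'_r,\bar\sigma)\sqsubseteq(\tau,U)$, which supplies the overriding witness in $M_\mathsf{s}(m,z,C'_r)$.

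I expect the main obstacle to be the bookkeeping in the \rulename{TA-Invoke} case: one must confirm that $M_\mathsf{a}$ is evaluated on exactly the argument sequence $\bar\sigma$ over which the \rulename{TS-Invoke} premise quantifies, so that each member of the algorithmic join transfers verbatim into an admissible semi-declarative witness, and that non-atomic receiver and argument types are handled consistently across the two systems (via their atoms and the monotonicity of $M_\mathsf{a}$). Everything else is routine rule-by-rule matching modulo subsumption.
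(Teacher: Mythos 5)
Your proof is correct and takes what is essentially the paper's own route: the paper gives no explicit proof of this theorem beyond the observation that the algorithmic rules are ``more specialized versions'' of the semi-declarative ones (joins replacing existential/subsumption slack), and your obligation-by-obligation bijection plus the rule-by-rule induction---with \rulename{TA-Invoke} handled exactly by reading each member of the join as a witness $(\bar\sigma,\tau_r,U_r)\in M_\mathsf{s}(m,z',C_r)$ with $(\tau_r,U_r)\sqsubseteq(\tau,U)$---is precisely that observation made rigorous. Your extra check that $M_\mathsf{s}$ satisfies the overriding condition (via the monotonicity requirement on $M_\mathsf{a}$ instantiated at $\bar\sigma\subtypeOf\bar\sigma$) covers a well-formedness point the paper leaves implicit.
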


Soundness of algorithmic type system is inherited from the soundness 
of semi-declarative system. 
%
Next, we state the completeness of the algorithmic system with respect
to the semi-declarative system 

\begin{theorem}[Completeness]\label{thm:algocomplete}
  Let $\Prg$ be a program and $(\TPool_\mathsf{s},F,M_\mathsf{s})$ a semi-declarative
  class table. If $\Prg$ is well-typed w.r.t. $(\TPool_\mathsf{s},F,M_\mathsf{s})$ then
  there is an algorithmic method typing $M_\mathsf{a}$ such that the following
  conditions hold:
\begin{itemize}
\item $\Prg$ is well-typed w.r.t. $(\TPool_\mathsf{s},F,M_\mathsf{a})$,
\item  ``$M_\mathsf{a}$ has better types than $M_\mathsf{s}$,'' that is, for
  each $(m,z,C_{r})\in\dom(M_\mathsf{s})$, each
  $(\bar\sigma,\tau,U)\in M_\mathsf{s}(m,z,C_{r})$, and each
  $\bar\sigma_\mathsf{a}\in\atoms(\bar\sigma)$, there is a
  $(\tau',U')\in M_\mathsf{a}(m,z,C_{r},\bar\sigma_\mathsf{a})$ such that
  $(\tau',U')\sqsubseteq (\tau,U)$.
\end{itemize}
\end{theorem}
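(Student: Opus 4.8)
The plan is to build the algorithmic method typing $M_\mathsf{a}$ as a least fixpoint of the algorithmic rules and then read off the two required properties from the fixpoint. First I would fix the domain of $M_\mathsf{a}$ to be $\setx{(m,z,C_r,\bar\sigma_\mathsf{a})}{(m,z,C_r)\in\dom(M_\mathsf{s}),\ \bar\sigma_\mathsf{a}\in\atoms(\bar\sigma)\text{ for some }(\bar\sigma,\tau,U)\in M_\mathsf{s}(m,z,C_r)}$, i.e.\ just enough atomic argument tuples to cover every entry of $M_\mathsf{s}$. For a fixed program only finitely many classes, regions, contexts and effects are in play, so $\Lattice=\Types\times\Effects$ (extended by a bottom element $\bot$, propagated strictly through the rules) is a finite complete lattice, and the set of functions from the above domain into it, ordered pointwise by $\sqsubseteq$, is again a finite complete lattice. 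On it I define the operator $\Phi$ that maps a candidate $N$ to the function sending $(m,z,C_r,\bar\sigma_\mathsf{a})$ to the (unique, since the rules are syntax-directed) algorithmic type of the body $\mTable(C,m)$ computed in the atomic context $\Gamma=[\this\mapsto C_r]\cup[x^m_i\mapsto(\bar\sigma_\mathsf{a})_i]$, where recursive method lookups in \rulename{TA-Invoke} consult $N$. Because $N$ occurs only positively in the rules (through the monotone join $\bsqcup$ and monotone table lookups), $\Phi$ is monotone, so by Knaster--Tarski its least fixpoint $M_\mathsf{a}$ exists and lies below every $N$ with $\Phi(N)\sqsubseteq N$.

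The technical core is an auxiliary expression lemma proved by induction on the semi-declarative derivation: fix any $N$ satisfying the ``better types'' invariant against $M_\mathsf{s}$; then whenever $\SType{\Gamma}{z}{e}{\tau}{U}$ holds and $\Gamma_\mathsf{a}$ is an atomic refinement of $\Gamma$ (that is, $\Gamma_\mathsf{a}(x)\in\atoms(\Gamma(x))$ for every free variable $x$), the algorithmic system instantiated with $N$ derives $\AType{\Gamma_\mathsf{a}}{z}{e}{\tau'}{U'}$ for some $(\tau',U')\sqsubseteq(\tau,U)$. Almost all cases reduce to monotonicity of the corresponding operations; the informative case is \rulename{TS-Invoke} against \rulename{TA-Invoke}. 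There the semi-declarative step picks, for each region $r\in R$ of the receiver, a tuple $(\bar\sigma',\tau',U')\in M_\mathsf{s}(m,z',C_r)$ with $(\bar\sigma',\tau',U')\msqsubseteq\MethodTP{\bar\sigma}{\tau}{U}$, whereas the algorithmic step forms $\bsqcup\setx{N(m,z',C_r,\bar\sigma)}{r\in R}$ over the same contexts $z'=\phi(z,C,r,m,i)$. Using contravariance of the argument types together with the well-formedness (monotonicity) of $N$ and the invariant $N\sqsubseteq M_\mathsf{s}$, each joinand is bounded above by $(\tau,U)$, and hence so is their join; definedness of all lookups is guaranteed by the choice of domain, since every method call reaching a well-typed body has a matching $M_\mathsf{s}$ entry and therefore an in-domain $N$-entry.

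It remains to assemble these pieces. Let $\hat M_\mathsf{s}$ be the algorithmic-style typing obtained by atomizing $M_\mathsf{s}$, namely $\hat M_\mathsf{s}(m,z,C_r,\bar\sigma_\mathsf{a}) := \bsqcup\setx{(\tau,U)}{(\bar\sigma,\tau,U)\in M_\mathsf{s}(m,z,C_r),\ \bar\sigma_\mathsf{a}\in\atoms(\bar\sigma)}$, which by construction satisfies the invariant. Applying the expression lemma with $N=\hat M_\mathsf{s}$ to each method body, and using that $\Prg$ is well-typed w.r.t.\ $M_\mathsf{s}$, shows that the algorithmic type of every body is $\sqsubseteq$ its declared semi-declarative type, i.e.\ $\Phi(\hat M_\mathsf{s})\sqsubseteq\hat M_\mathsf{s}$. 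Hence $M_\mathsf{a}=\mathrm{lfp}(\Phi)\sqsubseteq\hat M_\mathsf{s}$, which is exactly the second bullet of the theorem. For the first bullet, being a fixpoint means $M_\mathsf{a}(m,z,C_r,\bar\sigma_\mathsf{a})=\Phi(M_\mathsf{a})(m,z,C_r,\bar\sigma_\mathsf{a})$ equals the algorithmic type of the body in the prescribed context, which is precisely the well-typedness requirement for $(\TPool_\mathsf{s},F,M_\mathsf{a})$; the expression lemma at $N=M_\mathsf{a}$ (legitimate since $M_\mathsf{a}\sqsubseteq\hat M_\mathsf{s}$) supplies the needed derivations and confirms that these values are genuine (non-$\bot$). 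Finally I would check that $M_\mathsf{a}$ inherits the contravariance/monotonicity condition required of an algorithmic class table from that of $M_\mathsf{s}$ and the monotonicity of $\Phi$.

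I expect the main obstacle to be the mutual recursion among method typings: because a method body may transitively call itself, there is no well-founded induction on derivations across method boundaries, which is exactly why the argument is organized as a fixpoint together with the inequality $\Phi(\hat M_\mathsf{s})\sqsubseteq\hat M_\mathsf{s}$ rather than as a direct structural induction. Getting the \rulename{TS-Invoke}/\rulename{TA-Invoke} case to line up---reconciling the semi-declarative per-region existential choice with the algorithmic join while keeping all lookups in-domain and bounded by the invariant---is where the care is concentrated.
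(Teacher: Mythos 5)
Your proof follows essentially the same route as the paper's: construct $M_\mathsf{a}$ as the least fixpoint of the monotone operator $\Phi$ that re-types all method bodies under an assumed table, observe that well-typedness w.r.t.\ $M_\mathsf{s}$ makes (an atomized version of) $M_\mathsf{s}$ a pre-fixpoint, and conclude $M_\mathsf{a}=\mathrm{lfp}(\Phi)\sqsubseteq M_\mathsf{s}$; you even make explicit the expression-level lemma and the \rulename{TS-Invoke}/\rulename{TA-Invoke} case analysis that the paper compresses into the sentence ``well-typedness essentially says $\Phi(M_\mathsf{s})\sqsubseteq M_\mathsf{s}$''. There is, however, a step that fails as written. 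You atomize $M_\mathsf{s}$ with a \emph{join}, $\hat M_\mathsf{s}(m,z,C_r,\bar\sigma_\mathsf{a}):=\bsqcup\setx{(\tau,U)}{(\bar\sigma,\tau,U)\in M_\mathsf{s}(m,z,C_r),\ \bar\sigma_\mathsf{a}\in\atoms(\bar\sigma)}$, but both places where $\hat M_\mathsf{s}$ is used require a \emph{lower} bound on all applicable entries. Semi-declarative method typings are \emph{sets} of tuples precisely to permit polymorphism, so $M_\mathsf{s}(m,z,C_r)$ may contain two entries $(\bar\sigma,\tau_1,U_1)$ and $(\bar\sigma,\tau_2,U_2)$ with $(\tau_1,U_1)$ and $(\tau_2,U_2)$ incomparable. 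Then $\hat M_\mathsf{s}(m,z,C_r,\bar\sigma_\mathsf{a})=(\tau_1\sqcup\tau_2,U_1\cup U_2)$ does \emph{not} ``by construction satisfy the invariant'' --- it lies above each entry rather than below it --- so your expression lemma may not be instantiated with $N=\hat M_\mathsf{s}$, leaving the pre-fixpoint inequality $\Phi(\hat M_\mathsf{s})\sqsubseteq\hat M_\mathsf{s}$ unsupported; and even granting $M_\mathsf{a}\sqsubseteq\hat M_\mathsf{s}$, that is \emph{not} ``exactly the second bullet'', because being below the join of the entries does not imply being below each of them.

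The repair is local and preserves your architecture. Either atomize by the greatest lower bound, defining $\check M_\mathsf{s}(m,z,C_r,\bar\sigma_\mathsf{a})$ as the infimum of $\setx{(\tau,U)}{(\bar\sigma,\tau,U)\in M_\mathsf{s}(m,z,C_r),\ \bar\sigma_\mathsf{a}\in\atoms(\bar\sigma)}$ (meets exist because $\Lattice$ is finite, has all joins and least element $(\NullType_\emptyset,\emptyset)$): then pointwise domination by $\check M_\mathsf{s}$ \emph{is} your invariant, the lemma applies at $N=\check M_\mathsf{s}$, well-typedness yields $\Phi(\check M_\mathsf{s})\sqsubseteq\check M_\mathsf{s}$, and $M_\mathsf{a}\sqsubseteq\check M_\mathsf{s}$ gives the second bullet. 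Or, closer to the paper, iterate $\bot\sqsubseteq\Phi(\bot)\sqsubseteq\Phi^2(\bot)\sqsubseteq\cdots$ and prove by induction on $i$, with your expression lemma as the inductive step, that every iterate satisfies the better-types property w.r.t.\ $M_\mathsf{s}$; the chain stabilizes in the finite lattice and the property passes to $M_\mathsf{a}$. (This is what the paper's Kleene-chain induction does, recording in addition that each iterate is single-valued, which is what lets the limit be read as an algorithmic table.) A second, smaller gap: your definedness claim for lookups is too optimistic. Rule \rulename{TS-Invoke} only guarantees $\bar\sigma\subtypeOf\bar\sigma'$ for \emph{some} entry with argument types $\bar\sigma'$, so if a call site passes an argument whose class is a proper subclass of the class in $\bar\sigma'$, the key in $\atoms(\bar\sigma)$ is not an atom of $\bar\sigma'$ and falls outside your chosen domain; moreover, under a $\syntax{let}$ the bound variable's algorithmic type need not be atomic, so the lemma's restriction to atomic refinements $\Gamma_\mathsf{a}$ cannot be maintained through the induction. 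Both are cured by defining $M_\mathsf{a}$ on all relevant tuples of $\Methods\times\Contexts\times\ATypes\times\ATypes^*$, stating the lemma for arbitrary pointwise-$\subtypeOf$-smaller contexts, and resolving non-atomic argument lookups by joining over $\atoms(\cdot)$.
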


The desired algorithmic typing $M_\mathsf{a}$ is constructed as the least fix-point 
of the operator that computes the most precise types of method bodies under
an assumed method typing. 
If we had not introduced $\psi$ to resolve the nondeterminism 
in the rule~\rulename{TD-New},
we would obtain a typing like $\Powerset{\Types\times \Effects}$
rather than $(\Types \times \Effects)$ for the algorithmic types, 
but then it would not be clear how to compare ``best'' typings
by iteration as we do. 
For one thing, the cardinality of  
$\Powerset{\Types\times \Effects}$ is exponentially
larger than that of $(\Types \times \Effects)$. More importantly
there is no obvious ordering on $\Powerset{\Types\times \Effects}$
to represent improvement. It seems that the automatic inference of regions 
without using contexts is a computationally harder problem with a disjunctive
flavor requiring for instance SAT-solving but not doable by plain fix-point 
iteration.

\subsubsection{Type Inference Algorithm}
\label{sec:alg}

From the algorithmic type system in Section~\ref{sec:algo}, a type inference
algorithm can easily be constructed by reading the rules as a functional program.
Appendix~\ref{app:alg} presents a more general type inference
algorithm that infers an algorithmic class table for a given
program~$P$, provided the standard Java types of the program's methods
and fields are also given in form of a class table~$(F_0,M_0)$.
As output it returns an algorithmic class table~$(\TPool_\mathsf{a},F,M_\mathsf{a})$
such that $P$ is well-typed with respect to it.
Thus the algorithm can be readily used to check whether an
expression~$e$ follows a guideline.


\section{Experimental Evaluation}\label{sec:eval}

\subsection{Implementation}\label{sec:impl}

We have implemented a variant of the type inference algorithm from
Section~\ref{sec:alg} which applies to actual Java programs rather
than \FJEUS.\footnote{The implementation is available at \url{https://github.com/ezal/TSA}.} We describe next the main
ingredients of the implementation (see Appendix~\ref{app:eval} for further details).
Most importantly, we phrase our analysis as a dataflow problem; it is
well-known that type inference can be formulated as a dataflow
problem, see e.g.~\cite{nielsonbook}.
We thus distinguish between intraprocedural analysis and
interprocedural analysis.

Our implementation is built on top of the Soot
framework~\cite{sootVallee,lam2011soot}.
We benefit from Soot in two ways. First, we use it to transform Java
code into Jimple code (a model for Java bytecode), which
represents the source language of our analysis.
Second, we use Soot's generic intraprocedural dataflow analysis
(extending the \ls{ForwardFlowAnalysis} class) to implement
our intraprocedural analysis.
The interprocedural analysis is implemented through a standard
fix-point iteration using summary tables. Namely, each iteration
starts with a summary table, returns a new table, and the two tables
are compared; if there is a difference the old one is replaced with
the new one and the next iteration starts with the latter. It is clear
that the new table extends the old one at each iteration
so that a fix-point is reached and hence the analysis
terminates~\cite{SharirPnueli}.

\subsection{Experiments}

We tested and evaluated our tool on the Stanford Securibench Micro
benchmark.\footnote{\url{https://suif.stanford.edu/~livshits/work/securibench-micro/}}
Among the 12 categories of test cases provided by the benchmark, we
have analyzed all of them, excluding the one on reflection.
Table~\ref{tab:securibench-micro} lists the results obtained.
The table also contains a row for the additional examples we
considered, which include the ones appearing in this paper.
The 't' and 'w' columns denote respectively the number of tests in the
category, and how many of those run as expected.
Whenever the result is not as expected, the reason is mentioned in the
``comments'' column, as follows: 
\begin{compactenum}[(1)]
\item\label{item:path_sensitivity} Detection of the problem in the
  test case requires path-sensitivity, while our analysis is
  path-insensitive.
\item\label{item:strong_updates} Field updates are conservatively
  treated as weak updates, which sometimes leads to false positives,
  see Appendix~\ref{app:strongupdates3} for a concrete example.
\item\label{item:aliasing3} We believe that this test case was wrongly
  marked as violating by the benchmark, see
  Appendix~\ref{app:aliasing3} for details.
\item\label{item:features} We do not yet support
  two-dimensional arrays and concurrent features.
\end{compactenum}
Each test case is analyzed in at most a few seconds (on a
standard computer), except for one case which required 18 seconds.
\begin{table}[t]
\centering
\caption{Results on the SecuriBench Micro benchmark and on additional examples.}
\label{tab:securibench-micro}
\begin{tabular}{|l|r@{/}l|l|@{\qquad}|l|r@{/}l|l|}
  \hline
  test category & w&t & comments
  &
  test category & w&t & comments
  \\\hline\hline
  Aliasing & 5&6 & (\ref{item:aliasing3}) 
  & 
  Inter & 14&14 &
  \\\hline
  Arrays & 8&10 & (\ref{item:features}): matrices
  & 
  Pred & 7&9 & (\ref{item:path_sensitivity})
  \\\hline
  Basic & 42&42 & 
  & 
  Sanitizers & 6&6 &
  \\\hline
  Collections & 14&14 & 
  & 
  Sessions & 3&3 &
  \\\hline
  DataStructures & 6&6 &
  & 
  StrongUpdates & 3&5 & (\ref{item:strong_updates}), (\ref{item:features}): \texttt{synchronize}
  \\\hline
  Factories & 3&3 &
  &
  our examples & 24&25 & (\ref{item:path_sensitivity}) 
  \\\hline
\end{tabular}
\end{table}
We have also successfully analyzed (in 0.3sec) an application from the
Stanford Securibench
benchmark,\footnote{\url{https://suif.stanford.edu/~livshits/work/securibench/}}
namely \strcl{blueblog}.
%

\section{Related Work} 

We present a review of recent work in the literature that is
relevant to our work. 
Static analysis has a long history as a research area, which has 
also been subject to interest from industry. Among many books and 
surveys available in the literature, we refer to \cite{nielsonbook} 
for fundamentals and to~\cite{Chess2007} for an application-oriented 
reference. In~\cite{Chess2007} authors give
a detailed explanation of static analysis as part of code review,
explain secure programming guidelines and provide exercises with 
the Fortify code analyzer~\cite{fortifyurl}.
%
Other commercial static analysis tools that help with secure web
programming in particular include 
CheckMarx~\cite{checkmarxurl},
%
AppScan~\cite{appscanurl},
%
and Coverity~\cite{coverityurl}.
These tools check source code against vulnerabilities for various languages 
including C and Java, and provide compliance with guidelines offered by 
institutions such as OWASP, MISRA, SANS, and Mitre CWE at different
levels. Although this fact is expressed in their data sheets, the
question of how the commercial tools formalize the guidelines differs
from one tool to another and the common practice in general is to
hardwire a given guideline into the tool.
One of our partially reached goals is to formalize guidelines so that
they are specified separately from the source code of our tool and are
open to independent review.

Skalka et al.\ develop a type and effect inference system to enforce
trace-based safety properties for
higher-order~\cite{DBLP:journals/jfp/SkalkaSH08} and object-oriented
\cite{DBLP:journals/entcs/SkalkaSH05,DBLP:journals/lisp/Skalka08}
languages. They represent event traces of programs as effects and
compute a set of constraints which are later fed into off-the-shelf
model-checking tools. Thus combining type inference and model
checking, Skalka et al. are able to analyze programs with respect to
trace properties, in particular flow-sensitive security properties
such as access control and resource usage policies.
The difference to our approach is that types do not contain the full
information about the possible traces but only a finite abstraction
which is just fine enough to decide compliance with a given policy. In
this way, one may expect more succinct types, more efficient
inference, and better interaction with the user. Another difference is
that our approach is entirely based on type systems and abstract
interpretation and as such does not rely on external model-checking
software. This might make it easier to integrate our approach with
certification. One can also argue that our approach is more in line
with classical type and effect systems where types do not contain
programs either but rather succinct abstractions akin to our effects.
Our work performs string analysis by reducing the problem to a type
inference analysis, wherein to track strings types are extended with
regions.

Closer to the present work is the Java(X) system
\cite{Degen11,DBLP:conf/ecoop/DegenTW07}. It uses ML-style type
inference based on parametrically polymorphic (refined!) types
rather than full polymorphism as we do (polymorphic in the sense that a method
type is an arbitrary set of simple, refined types). Also the system
does not have region-based tracking of aliasing. On the other hand,
Java(X) supports \emph{type state} \cite{DBLP:conf/ecoop/DeLineF04},
i.e.\ the ability of changing the type of an object or a variable
through a modifying action. For this to be sound it is clear that the
resource in question must be referenced by a unique pointer and a
linear typing discipline is used in loc.cit.\ to ensure this. We think
that type state is essentially orthogonal to our approach and could be
added if needed. So far, we felt that for the purpose of enforcing
guidelines for secure web programming type state is not so
relevant. 

\section{Conclusion}

We have developed a type-based analysis to check Java%
\footnote{As usual, the formal description of our analysis is in terms
  of an idealized language, \FJEUS.
  The implementation takes genuine Java programs. However, it does not
  support certain features such as concurrency and reflection, which
  we discuss below.} %
programs against secure programming guidelines.
The analysis is sound, albeit incomplete: if the analysis reports
``success'', then this means the given program follows the guideline,
if it reports ``failure'' the program may or may not follow the
guideline.

Our system is based on the region-based type system in~\cite{comlan}
and extends it in several directions.
First, we enhanced the refined type system in~\cite{comlan} with an
effect mechanism to track event traces, following~\cite{fast}.
Second, we parametrized the system over arbitrary guidelines using a)
syntactic monoids to abstract traces and string values, b) external
static methods given by their semantics, and c) using mockup code to
represent library components.
Third, we provide a more precise polymorphism for method typings via a
precise interprocedural analysis (implemented based on our algorithmic
type and effect system) while the system in~\cite{comlan} mainly uses
contexts to index refined types and gives only a limited amount of
polymorphism requiring user intervention. In~\cite{comlan} contexts
provided a novel rigorous type-theoretic foundation for
context-sensitive and points-to analyses.  We still use contexts to
determine regions for newly allocated objects.
Fourth, we provide a clear semantic foundation for our analysis and
obtain the implementation of type inference in three steps, namely via
declarative, semi-declarative, and algorithmic type systems.  We
establish correctness of our system by proving that each type system
is sound with respect to its precursor and that the algorithmic system
is also complete w.r.t.~the semi-declarative system.
Finally, we implemented our system allowing us to analyze actual Java
code. We rely on the Soot Framework and analyze on the level of Soot
intermediate code (i.e.~Jimple) which gives us a variety of language
features such as various loop and branching constructs, as well as a
limited amount of exception handling for free.
While our implementation is for now a prototype, it has allowed us to
check a significant part of the SecuriBench Micro benchmark and also a
medium-size application.


It is possible to extend our work in several ways. The obvious
direction is to formalize more guidelines taken, e.g., from the OWASP
or SANS portals to secure web programming. This will further validate
our approach to formalization through the described mix of automata,
monoids, and modelling of framework code.  It will also motivate
various extensions to our type-based analysis which we will tackle as
needed. We describe here the most important ones.

\newcommand{\myparagraph}[1]{\smallskip\noindent\textit{#1}.\ }

\myparagraph{Reflection}
While unconstrained use of reflection would seem to
preclude any kind of meaningful static analysis it appears that the
use of reflection in actual applications is rather restricted. Being
able, for example, to integrate into our analysis the contents of
XML-manifests and similar data which is usually processed via
reflection would carry a long way.

\myparagraph{Path sensitivity}
If needed, a limited amount of path sensitivity can be obtained by
introducing a class of booleans with subclasses representing true and
false. One could then use refined typings for certain predicates which
depending on the \emph{refined} types of their parameters ensure that
the result is true or false.

\myparagraph{More general automata}
Some guidelines may require us to look beyond regular properties.
This may require having to use richer specification formalisms, like
automata over infinite alphabets~\cite{KaminskiF94}, to capture
authorization per resource rather than per resource kind.
Also, we might want to analyze infinitary program properties,
e.g. liveness and fairness properties of infinite program traces. To
this end it should be possible to replace our finite state machines
with B\"uchi automata, and accordingly, our monoids with so-called
$\omega$-semigroups. We have successfully carried out initial
investigations along those lines~\cite{HofmannChen2014,HofmannLedent2017}.

\myparagraph{Concurrency and higher-order} 
Other possibilities for extension are concurrency and higher-order
functions, i.e.~anonymous methods, inner classes, and similar. We
believe that ideas from higher-order model checking in type-theoretic
form~\cite{DBLP:conf/lics/KobayashiO09} could be fruitful there and
initial investigations 
have confirmed this.

\myparagraph{IFDS} Another interesting extension would be to harness the
approach to interprocedural analysis IFDS
\cite{RepsHS95} for our type inference which has been
integrated with Soot in the form of the Heros plugin. While this would
not extend expressivity or accuracy of our approach the increased
efficiency might increase its range. As we see it, the gist of IFDS is
on the one hand the restriction of function summaries to atoms using
distributivity and on the other a clever management of updates to
summaries so that repeated recomputation during fixpoint iteration can
be reduced to a minimum. Our approach already incorporates the first
aspect because types of callees and method parameters are always
atomic. We think that this is the reason why our approach works
surprisingly well. Nevertheless, we hope that some more efficiency
could be squeezed out of the second ingredient and it would be
particularly interesting to use Heros as a kind of 
solver to which we can offload the computation of the final method
table.

\myparagraph{Certification} While we have not actually fleshed this out, it is clear that the type-theoretic formulation of our analysis lends itself particularly well to independent certification.  Indeed, we could write a type checker (which would involve no fixpoint iteration and similar algorithmic techniques at all) for the declarative system and then have the algorithmic type inference generate a derivation in the former system to be checked. This might be much easier than verifying the implementation of the inference engine. Similarly, our soundness proof can be formalized in a theorem prover and declarative typing derivations can then be used to generate formal proofs of correctness; see e.g.\ 
the Mobius project \cite{BartheBCGHMPPSV06}.


\bibliographystyle{abbrv}
\bibliography{xss,own,region}

\begin{thebibliography}{10}

\bibitem{checkmarxurl}
{Checkmarx. \url{https://www.checkmarx.com}.}

\bibitem{coverityurl}
{Coverity}. \url{http://www.coverity.com}.

\bibitem{fortifyurl}
Fortify. \url{https://software.microfocus.com/en-us/software/sca}.

\bibitem{appscanurl}
{IBM AppScan. \url{http://www.ibm.com/software/products/en/appscan-source}.}

\bibitem{BartheBCGHMPPSV06}
G.~Barthe, L.~Beringer, P.~Cr{\'e}gut, B.~Gr{\'e}goire, M.~Hofmann,
  P.~M{\"u}ller, E.~Poll, G.~Puebla, I.~Stark, and E.~V{\'e}tillard.
\newblock Mobius: Mobility, ubiquity, security.
\newblock In {\em TGC}, volume 4661 of {\em LNCS}, pages 10--29. Springer,
  2007.

\bibitem{comlan}
L.~Beringer, R.~Grabowski, and M.~Hofmann.
\newblock Verifying pointer and string analyses with region type systems.
\newblock {\em Computer Languages, Systems {\&} Structures}, 39(2):49--65,
  2013.

\bibitem{Chess2007}
B.~Chess and J.~West.
\newblock {\em Secure Programming with Static Analysis}.
\newblock Addison-Wesley Professional, first edition, 2007.

\bibitem{ChristensenMS03}
A.~S. Christensen, A.~M{\o}ller, and M.~I. Schwartzbach.
\newblock Precise analysis of string expressions.
\newblock In {\em SAS}, volume 2694 of {\em LNCS}, pages 1--18. Springer, 2003.

\bibitem{CregutA05}
P.~Cr{\'e}gut and C.~Alvarado.
\newblock {Improving the Security of Downloadable Java Applications With Static
  Analysis}.
\newblock {\em ENTCS}, 141(1):129--144, 2005.

\bibitem{Degen11}
M.~Degen.
\newblock {\em JAVA(X) A Type-Based Program Analysis Framework}.
\newblock PhD thesis, Universit{\"a}t Freiburg, June 2011.

\bibitem{DBLP:conf/ecoop/DegenTW07}
M.~Degen, P.~Thiemann, and S.~Wehr.
\newblock Tracking linear and affine resources with {Java(X)}.
\newblock In {\em ECOOP}, volume 4609 of {\em LNCS}, pages 550--574. Springer,
  2007.

\bibitem{DBLP:conf/ecoop/DeLineF04}
R.~DeLine and M.~F{\"{a}}hndrich.
\newblock Typestates for objects.
\newblock In {\em ECOOP}, volume 3086 of {\em LNCS}, pages 465--490. Springer,
  2004.

\bibitem{fast}
R.~Grabowski, M.~Hofmann, and K.~Li.
\newblock Type-based enforcement of secure programming guidelines - code
  injection prevention at {SAP}.
\newblock In {\em FAST}, volume 7140 of {\em LNCS}, pages 182--197. Springer,
  2012.

\bibitem{HofmannChen2014}
M.~Hofmann and W.~Chen.
\newblock Abstract interpretation from {B}{\"{u}}chi automata.
\newblock In {\em CSL-LICS}, pages 51:1--51:10, 2014.

\bibitem{HofmannJ06}
M.~Hofmann and S.~Jost.
\newblock Type-based amortised heap-space analysis.
\newblock In {\em ESOP}, volume 3924 of {\em LNCS}, pages 22--37. Springer,
  2006.

\bibitem{HofmannLedent2017}
M.~Hofmann and J.~Ledent.
\newblock A cartesian-closed category for higher-order model checking.
\newblock In {\em LICS}, pages 1--12, 2017.

\bibitem{IgarashiPW01}
A.~Igarashi, B.~C. Pierce, and P.~Wadler.
\newblock Featherweight {Java}: a minimal core calculus for {Java} and {GJ}.
\newblock {\em ACM Transactions on Programming Languages and Systems},
  23(3):396--450, 2001.

\bibitem{KaminskiF94}
M.~Kaminski and N.~Francez.
\newblock Finite-memory automata.
\newblock {\em Theoretical Computer Science}, 134(2):329--363, 1994.

\bibitem{DBLP:conf/lics/KobayashiO09}
N.~Kobayashi and C.~L. Ong.
\newblock A type system equivalent to the modal mu-calculus model checking of
  higher-order recursion schemes.
\newblock In {\em LICS}, pages 179--188. {IEEE} Computer Society, 2009.

\bibitem{lam2011soot}
P.~Lam, E.~Bodden, O.~Lhot{\'a}k, and L.~Hendren.
\newblock The {Soot} framework for {Java} program analysis: a retrospective.
\newblock In {\em CETUS}, 2011.

\bibitem{Lenherr08}
T.~Lenherr.
\newblock {Taxonomy and Applications of Alias Analysis}.
\newblock Master's thesis, ETH Z{\"u}rich, 2008.

\bibitem{Lhotak06}
O.~Lhot\'ak.
\newblock {\em Program Analysis using Binary Decision Diagrams}.
\newblock PhD thesis, McGill University, Jan. 2006.

\bibitem{nielsonbook}
F.~Nielson, H.~R. Nielson, and C.~Hankin.
\newblock {\em Principles of Program Analysis}.
\newblock Springer, 1999.

\bibitem{RepsHS95}
T.~W. Reps, S.~Horwitz, and S.~Sagiv.
\newblock Precise interprocedural dataflow analysis via graph reachability.
\newblock In {\em POPL}, pages 49--61. {ACM} Press, 1995.

\bibitem{SharirPnueli}
M.~Sharir and A.~Pnueli.
\newblock Two approaches to interprocedural data flow analysis.
\newblock In S.~S. Muchnick and N.~D. Jones, editors, {\em Program flow
  analysis - theory and applications}, pages 189--233. Prentice-Hall, Englewood
  Cliffs, NJ, 1981.

\bibitem{DBLP:journals/lisp/Skalka08}
C.~Skalka.
\newblock Types and trace effects for object orientation.
\newblock {\em Higher-Order and Symbolic Computation}, 21(3):239--282, 2008.

\bibitem{DBLP:journals/entcs/SkalkaSH05}
C.~Skalka, S.~F. Smith, and D.~V. Horn.
\newblock A type and effect system for flexible abstract interpretation of
  {Java}.
\newblock {\em ENTCS}, 131:111--124, 2005.

\bibitem{DBLP:journals/jfp/SkalkaSH08}
C.~Skalka, S.~F. Smith, and D.~V. Horn.
\newblock Types and trace effects of higher order programs.
\newblock {\em Journal of Functional Programmming}, 18(2):179--249, 2008.

\bibitem{Sma2011}
Y.~Smaragdakis, M.~Bravenboer, and O.~Lhot{\'{a}}k.
\newblock Pick your contexts well: understanding object-sensitivity.
\newblock In {\em POPL}, pages 17--30, 2011.

\bibitem{sootVallee}
R.~Vall{\'{e}}e{-}Rai, P.~Co, E.~Gagnon, L.~J. Hendren, P.~Lam, and
  V.~Sundaresan.
\newblock Soot - a {Java} bytecode optimization framework.
\newblock In {\em CASCON}. {IBM}, 1999.

\end{thebibliography}

\newpage
\appendix

\section{Guideline for Authorization}\label{app:auth}

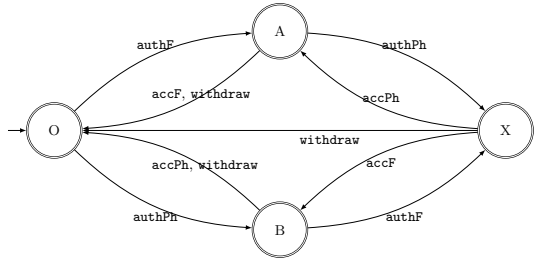
\begin{figure}[t]
  \centering
  \scalebox{0.6}{
    \begin{tikzpicture}[>=latex,initial text={},
                      every state/.style={minimum width=1.2cm}]
    \node[state,initial,accepting] (o)   at (1.0,2.5) {O};
    \node[state,accepting]         (a)   at (6,4.7) {A};
    \node[state,accepting]         (b)   at (6,0.3) {B};
    \node[state,accepting]         (x)   at (11,2.5) {X};
    \path[->] (o)   edge [bend left=20, above] 
                       node {\strcl{authF}} (a);
    \path[->] (a)   edge [bend left=20, below, near start] 
                       node {\parbox{3.0cm}{\strcl{accF}, \strcl{withdraw}}} (o);
    \path[->] (o)   edge [bend right=20, below] 
                       node {\strcl{authPh}} (b);
    \path[->] (b)   edge [bend right=20, above, near start] 
                       node {\parbox{3.0cm}{\strcl{accPh}, \strcl{withdraw}}} (o);
    \path[->] (a)   edge [bend left=20, above] 
                       node {\strcl{authPh}} (x);
    \path[->] (x)   edge [bend left=20, above] 
                       node {\strcl{accPh}} (a);
    \path[->] (b)   edge [bend right=20, below] 
                       node {\strcl{authF}} (x);
    \path[->] (x)   edge [bend right=20, below] 
                       node {\strcl{accF}} (b);
    \path[->] (x)   edge [below, near start] 
                    node {\parbox{3.5cm}{\strcl{withdraw}}} (o); 
  \end{tikzpicture}}
  \caption{Authorization policy automaton}
  \label{fig:automaton-auth}
\end{figure}

Let us look at another guideline which states that any access to sensitive data must only be done after authorization. In this case, the programmer should call authorization methods before access methods.

The following code fragment illustrates the distinction between accesses to different types of resources, namely regular files and phone directories. The guideline is not satisfied.

\begin{minipage}{0.55\textwidth}
  \begin{lstlisting}[language=java]
class Authorization {
   main () {
      File file = new File("file.txt");
      Phone phone = new Phone("phonebook.txt");
      authFile();
      file.access();       /* OK */
      withdrawAuth();
      file.access();       /* BAD */
      authPhone();
      phone.access();      /* OK */
   }
}
\end{lstlisting}
\end{minipage}

To model this guideline we first consider the possible states which the
execution of this program can reach and then the events that trigger
state transitions.  We define four states; in state \strcl{O} no
access is allowed, in state \strcl{A} file access is allowed, in state
\strcl{B} phone access is allowed and in \strcl{X} both kinds of
access are allowed.  We assume that successful calls to framework
methods given above issue the following events: \ls{authF},
\ls{accF}, \ls{authPh}, \ls{accPh}, and \ls{withdraw}. Now we
can specify the policy with the finite state machine in
Figure~\ref{fig:automaton-auth}. 

Notice that this example is a rather coarse oversimplification. A more
realistic version could have different kinds of authorization that may
also be withdrawn via system calls. Also, we are not aiming at
implementing an access control system, but rather to check that code
written with good intentions does not violate guidelines pertaining to
authorization.

\section{A More General Guideline for Taintedness}\label{app:taint}

We reproduce here a guideline first formalized in~\cite{fast}. 

In addition to \ls{getString} and \ls{putString}
from Example~\ref{ex:intro}, we also have a function
\ls{escapeToHtml(input)} which escapes '\ls{<}' and '\ls{>}' to
'\ls{&lt;}' and '\ls{&gt;}', and a function
\ls{escapeToJs(input)} escaping user input to be used in a JavaScript
context which we don't detail. The tag alphabet is $\Sigma=\{\strcl{Lit},
\strcl{C1}, \strcl{C2}, \strcl{Script}, \strcl{/Script},
\strcl{Input}\}$ with the intention that $\strcl{Input}$ tags
unsanitized user input, \ls{escapeToHtml} results are tagged
\ls{C1}, \ls{escapeToJs} results are tagged
\ls{C2}, the tokens \ls{"<script>"} and \ls{"</script>"} are tagged
eponymously, and everything else is tagged~\ls{Lit}.

The automaton from Figure~\ref{fig:automaton} then specifies the guideline.
\begin{figure}[t]
  \centering
  \scalebox{0.8}{
  \begin{tikzpicture}[>=latex,initial text={},
                      every state/.style={minimum width=1.5cm}]
    \node[state,initial,accepting] (html)   at (1.5,1.5) {HTML};
    \node[state,accepting]         (script) at (8.5,1.5) {SCRIPT};
    \node[state]                   (fail)   at (5,0.3) {FAIL};
    \path[->] (html)   edge [loop below,left,near end] 
       node {\parbox{1.2cm}{\strcl{C1},\strcl{Lit},\\\strcl{/Script}}} 
       (html);
    \path[->] (html)   edge [above] 
                       node {\strcl{Script}} (script);
    \path[->] (script) edge [bend right=15,above] 
                       node {\strcl{/Script}} (html);
    \path[->] (script) edge [loop below,right,near start]
       node {\parbox{1.2cm}{\strcl{C2},\strcl{Lit},\\\strcl{Script}}}
       (script);
    \path[->] (html)   edge [above]
                       node {\strcl{C2}} (fail);
    \path[->] (html)   edge [below,bend right=30]
                       node {\strcl{Input}} (fail);
    \path[->] (script) edge [above]
                       node {\strcl{C1}} (fail);
    \path[->] (script) edge [below,bend left=30]
                       node {\strcl{Input}} (fail);
    \path[->] (fail)   edge [loop below,right,near start] 
       node {\parbox{3.5cm}{\strcl{C1},\strcl{C2},\strcl{Lit},\\%
                            \strcl{Script},\strcl{/Script},\strcl{Input}}}
       (fail);
  \end{tikzpicture}}
  \caption{Policy automaton for taintedness under sanitization}
  \label{fig:automaton}
\end{figure}
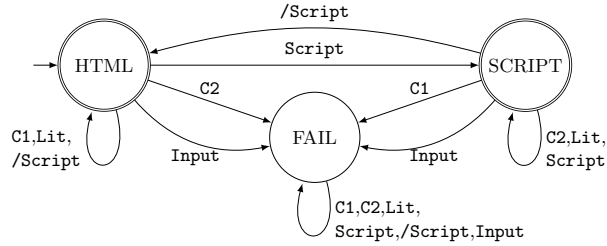
Intuitively, the automaton is run on the sequence of tags representing
how the string fed to \ls{putString} was obtained and should never go
into the failure state.

\section{Additional Details on \FJEUS Syntax}\label{app:syntax}


Consider the program in Example~\ref{ex:intro}. 
The program in \FJEUS and its class table is shown below.

\begin{minipage}{0.5\textwidth}
\begin{displaymath}  
  \begin{array}{rcl}
\mathit{fields}(\ident{C}) & = & \emptyset\\
\mathit{fields}(\ident{D}) & = & \{\ident{s}\}\\
\mathit{methods}(\ident{C}) & = & \{\ident{main}\}\\
\mathit{methods}(\ident{D}) & = & \{\ident{cD}\}\\
\mathit{mtable}(\ident{C},\ident{main}) & = & e_{\ident{main}}\\
\mathit{mtable}(\ident{D},\ident{cD}) & = & \ident{this}.s := x^{\ident{cD}}_1
\\[0.5ex]
\mathit{F_0}(\ident{D},\ident{s}) & = & \KString\\
\mathit{M_0}(\ident{C},\ident{main}) & = & (\varepsilon,\KString)\\
\mathit{M_0}(\ident{D},\ident{cD}) & = & (\seqvar{\KString},\KString)\\
  \end{array}
\end{displaymath}
\end{minipage}
\begin{minipage}{0.5\textwidth}
  \begin{lstlisting}[language=fjeus]
$e_{\ident{main}}$ :=
  let $a$ = $\ident{getString}$() in
  let $b$ = "test" in
  let $f_1$ = new $D$ in
  let _ = $f_1.\ident{cD}(a)$ in
  let $f_2$ = new $D$ in
  let _ = $f_2.\ident{cD}(b)$ in
  $\ident{putString}(f_2.s)$
$\phantom{I}$
\end{lstlisting}
\end{minipage}

\noindent
In the $e_{\ident{main}}$ expression above the underscore denotes some
variable not used in the rest of the program.
Note that, by convention, methods that normally would not return a
value, are assumed to return an empty string.

\section{Additional Details on \FJEUS Semantics}\label{app:opsem}

\begin{figure}
\centering
\begin{mathpar}
  \Rule{}
       { }
       {\FSem{s}{h}{x}{s(x)}{h}{\varepsilon}}
  \and 
  \Rule{}
       { }
       {\FSem{s}{h}{\Null}{\NullVal}{h}{\varepsilon}}
  \and
  \Rule{}
       {s(x)=s(y) \\ \FSem{s}{h}{e_1}{v}{h'}{w}}
       {\FSem{s}{h}{\IfEqual{x}{y}{e_1}{e_2}}{v}{h'}{w}}
  \and
  \Rule{}
       {s(x) \ne s(y) \\ \FSem{s}{h}{e_2}{v}{h'}{w}}
       {\FSem{s}{h}{\IfEqual{x}{y}{e_1}{e_2}}{v}{h'}{w}}
  \and
  \Rule{}
       {\FSem{s}{h}{e_1}{v_1}{h_1}{w_1} \\
        \FSem{s[x\mapsto v_1]}{h_1}{e_2}{v_2}{h_2}{w_2}} 
       {\FSem{s}{h}{\Let{x}{e_1}{e_2}}{v_2}{h_2}{w_1\cdot w_2}}
  \and
  \Rule{}
       {l\not\in\domof(h) \\
        F = [f\mapsto \NullVal]_{f\in \fList(C)}} 
       {\FSem{s}{h}{\New C}{l}{h[l\mapsto (C,F)]}{\varepsilon}}
  \and
  \Rule{}
       {\FSem{s}{h}{e}{v}{h}{w} \\
       \classOfVal_h(v)\preceq C}
       {\FSem{s}{h}{\Cast{e}{C}}{v}{h}{w}}
  \and
  \Rule{}
       {s(x) = l \\ h(l) = (\_,F)}
       {\FSem{s}{h}{x.f}{F(f)}{h}{\varepsilon}}
  \and
  \Rule{}
       {s(x)=l \\ h(l)=(C,F) \\\\ h'=h[l\mapsto(C,F[f\mapsto s(y)])]} 
       {\FSem{s}{h}{\Assign {x.f}{y}}{s(y)}{h'}{\varepsilon}}
  \and
  \Rule{}
       {s(x)=l \\ s' = [\this\mapsto l]\cup [x^m_i \mapsto
        s(y_i)]_{i\in\{1,\ldots,n\}} \\\\
        h(l)=(C,\_) \\ \FSem{s'}{h}{\mTable(C,m)}{v}{h'}{w}}
      {\FSem{s}{h}{x.m(y_1,\dots,y_n)}{v}{h'}{w}}
  \and
  \Rule{}
       {h(s(y_i)) = (\Str_i, w_i) \\ 
        l\not\in\domof(h) \\ h' = h[l\mapsto (\Str,w)] \\\\
        \sem(\fn)((\Str_1,w_1) \ldots, (\Str_n,w_n)) \ni (\Str,w,w')}
       {{\FSem{s}{h}{\fn(y_1, \ldots, y_n)}{l}{h'}{w'}}}
  \and
  \Rule{}
       {l\not\in\domof(h) \\\\
        h' = h[l\mapsto (\Str, \wordoflit(\Str))]}
       {\FSem{s}{h}{"\Str"}{l}{h'}{\varepsilon}}
  \and
  \Rule{}
       {h(s(x_1)) = (\Str_1, w_1) \\
        h(s(x_2)) = (\Str_2, w_2) \\\\
        l\not\in\domof(h) \\
        h' = h[l\mapsto (\Str_1\cdot\Str_2, w_1 \cdot w_2)]}
       {\FSem{s}{h}{x_1+x_2}{l}{h'}{\varepsilon}}
\end{mathpar}
\caption{The \FJEUS operational semantics}
\label{fig:fjeus-opsem-complete}
\end{figure}

The complete rules for the operational semantics can be found in
Figure~\ref{fig:fjeus-opsem-complete}.
A premise involving a partial function, like $s(x)=l$, 
always implies the side condition $x\in\domof(s)$.

The rules for variables, $\Null$, and the conditional are standard.
For \syntax{let} expressions, we concatenate the event traces of the
two expressions.
A new object is allocated at a fresh location with all fields set to
$\NullVal$.
A cast just checks whether the target type is a subtype of the subject
type and raises an exception if that is not the case. The auxiliary
function $\classOfVal_h(v)$ determines the type of the subject value
$v$ as follows: $\classOfVal_h(v)=\NullType$ if $v=\NullVal$, it
is $C$ if $h(v)=(C,\_)\in\Objs$, and it is $\KString$ if $h(v)\in
\SObjs$.
A field read access returns the field contents, while a field write
access updates the heap accordingly (and also evaluates to the written
value).

At a method call, a new store is created, consisting of a special
variable $\this$ and of the method parameters initialized with the
values of the passed arguments.  The return value, final heap and
event trace of the method execution are also the result of the call.
A builtin method call allocates a fresh location which stores the
result obtained by invoking directly the external semantics of the
builtin method on its string arguments.

In the rule for string literals, we rely on $\wordoflit$ to tag the
new string object.  The tagging of strings is a homomorphism with
respect to string concatenation.  

In non-recursive rules, except the one for builtin method calls, we
have the empty trace $\varepsilon$ as the generated event trace, as the
corresponding expressions do not produce any output.


\section{Additional Details on the Type Systems}\label{app:types}

\subsection{Declarative Type and Effect System}\label{app:decl}

The complete rules for the \FJEUS parametric type system can be found
in Figure~\ref{fig:type-system-declarative-complete}.  
We briefly explain the rules not presented in the body of the paper.
The rule \rulename{TD-Sub} is used to obtain weaker types and effects
for an expression.
The rule \rulename{TD-Var} looks up the type of a variable
in the context $\Gamma$. 
In \rulename{TD-If}, we require that both branches have the same
type and effect. This can be obtained in conjunction with the
weakening rule \rulename{TD-Sub}.
Furthermore, the rule exploits the fact that the two variables must
point to the same object (or be null) in the \syntax{then} branch,
therefore the intersection of the region sets can be assumed.
In \rulename{TD-Let}, we take into account that first the effects of
expression $e_1$ take place, and then the effects of expression
$e_2$. The overall effect is thus the concatenation of the sub-effects,
where we define $UU' = \setx{u\cdot u'}{u\in U, u'\in U'}$.

When a $C$ object is null, denoted as $\Null$, it is given type
$\NullType$ with an empty region and empty effect via the rule
\rulename{TD-Null}. This is done so that $\Null$ can be cast to any
object (in particular~$C$) in whatever region, by also using the
\rulename{TD-Sub} rule. Note that our type system is not meant to
prevent null-pointer exceptions, however, being an extension of the
standard Java type system, it prevents ``method not understood'' and
similar runtime errors. All soundness statements are conditional on
standard termination, i.e.\ termination without runtime
errors.\footnote{We do not include exceptions in \FJEUS and thus treat
  any exception as a runtime error. However, the Soot frontend that we
  use in the implementation (see Section~\ref{sec:impl}) compiles away
  a certain amount of exceptions.}
The rule \rulename{TD-Cast} allows casting the type of an expression
to a subtype or a supertype.

When a field is read (\rulename{TD-GetF}), we look up the type of the
field in the $F$~table. As the variable $x$ may point to a number of
regions, we need to ensure that $\tau$ is an upper bound of the types
of $f$ over all $r\in R$.  In contrast, when a field is written
(\rulename{TD-SetF}), the written value must have a subtype of the
types allowed for that field by the $F$~table with respect to each
possible region $r\in R$.

In the rule \rulename{TD-Lit}, we use $\wordoflit$ to determine the
annotation for string literals.
In \rulename{TD-Concat}, the type of a string obtained by
concatenation is defined by concatenating the monoid elements of the
types of the concatenated strings.

\begin{figure}
  \begin{mathpar}
    \Rule{TD-Sub}
         {\FType{\Gamma}{e}{\tau}{U} \\ \tau \subtypeOf \tau' \\ 
          U\subseteq U'}
         {\FType{\Gamma}{e}{\tau'}{U'}}
    \and
    \Rule{TD-Var}
         { }
         {\FType{\Gamma,x:\tau}{x}{\tau}{\{\monneutral\}}}
    \and
    \Rule{TD-If}
         {\FType{\Gamma, x:C_{R\cap S}, y:D_{R\cap S}}{e_1}{\tau}{U} \\ 
          \FType{\Gamma, x:C_{R}, y:D_{S}}{e_2}{\tau}{U} \\
        }
         {\FType{\Gamma, x:C_{R}, y:D_{S}}{\IfEqual{x}{y}{e_1}{e_2}}{\tau}{U}}
    \and
    \Rule{TD-Let}
         {\FType{\Gamma}{e_1}{\tau}{U} \\
           \FType{\Gamma,x:\tau}{e_2}{\tau'}{U'}}
         {\FType{\Gamma}{\Let{x}{e_1}{e_2}}{\tau'}{UU'}}
    \and
    \Rule{TD-Null}
         { }
         {\FType{\Gamma}{\Null}{\NullType_\emptyset}{\{\monneutral\}}}
    \and
    \Rule{TD-New}
         {C_{r} \in \TPool_\mathsf{d}}
         {\FType{\Gamma}{\New{C}}{C_{\set{r}}}{\set{\monneutral}}}
    \and
    \Rule{TD-Cast}
         {\FType{\Gamma}{e}{C_R}{U} \\ C\preceq D \text{ or } D\preceq C}
         {\FType{\Gamma}{\Cast{e}{D}}{D_R}{U}}
    \and
    \Rule{TD-Invoke}
         {\text{for all $r\in R$,
             there is $(\bar\sigma',\tau',U')\in M_\mathsf{d}(m,C_r)$
             such that $(\bar\sigma',\tau',U') \msqsubseteq \MethodTP{\seq\sigma}{\tau}{U}$}}
       {\FType{\Gamma,x:C_R,\seq y:\seq{\sigma}}{x.m(\seq y)}{\tau}{U}}
    \and
    \Rule{TD-Builtin}
         {\ar(\fn)=n \quad \text{$\Gamma(x_1) = \KString_{U_1}, \ldots, 
             \Gamma(x_n) = \KString_{U_n}$} \\\\  
           \text{$M(\fn) (u_1, \ldots, u_n) \sqsubseteq (U, U')$, for all $u_1 \in U_1, \ldots, u_n \in U_n$}}
         {\FType{\Gamma}{\fn(x_1, \dots, x_n)}{\KString_U}{U'}}
    \and
    \Rule{TD-GetF}
         {
           F(f,C_r)\subtypeOf\tau,\ \text{for all $r\in R$}}
         {\FType{\Gamma,x:C_R}{x.f}{\tau}{\{\monneutral\}}}
    \and
    \Rule{TD-SetF}
         {
           \tau \subtypeOf F(f,C_r),\ \text{for all $r\in R$}}
         {\FType{\Gamma,x:C,y:\tau}{\Assign{x.f}{y}}{\tau}{\{\monneutral\}}}
    \and
    \Rule{TD-Lit}
         {\wordoflit(\Str) = w}
         {\FType{\Gamma}{"\Str"}{\KString_{\{[w]\}}}{\{\monneutral\}}}
    \and
    \Rule{TD-Concat}
         {\Gamma(x_1) = \KString_U \\ \Gamma(x_2) = \KString_{U'}}
         {\FType{\Gamma}{x_1 + x_2}{\KString_{UU'}}{\{\monneutral\}}}
  \end{mathpar}

  \caption{The \FJEUS declarative type system}
  \label{fig:type-system-declarative-complete}
\end{figure}

\subsection{Soundness of the Declarative Type System}\label{app:proof}

We now give a formal interpretation of the typing judgment in form of
a soundness theorem. 
Namely, we prove soundness of declarative type system with respect to
operational semantics through interpretation of typing judgment using
heap typings.  Specifically, our interpretation of the typing judgment
$\FType{\Gamma}{e}{\tau}{U}$ states that whenever a well-typed program
is executed on a heap that is well-typed with respect to some heap
typing $\Pi$, then the final heap after the execution is well-typed
with respect to some possibly larger heap typing $\Pi'$.

A \emph{heap typing} $\Pi : \Locs \FnToFin \ATypes$ 
assigns to each heap location a static class (an upper
bound of the actual class found at that location) and a region for
ordinary objects.

We define a typing judgment for values $\WellFormed \Pi v \tau$,
which means that according to heap typing $\Pi$, the value $v$ may
be typed with $\tau$. In particular, the information in $\Pi(l)$
specifies the type of $l$.  
\begin{mathpar}
  \Rule{}
       { }
       {\WellFormed \Pi \NullVal \NullType_\emptyset}
  \qquad
  \Rule{}
       {\Pi(l)=C_r}
       {\WellFormed \Pi l {C_{r}}}
  \qquad       
  \Rule{}
       {\WellFormed \Pi v \sigma \\ \sigma \subtypeOf \tau}
       {\WellFormed \Pi v \tau}
\end{mathpar}
Also, the typing judgment of locations is lifted to stores and
variable contexts:
\begin{eqnarray*}
  \WellFormed \Pi s \Gamma & \text{ iff }  & 
     \WellFormed \Pi {s(x)}{\Gamma(x)}, \text{ for all } x\in\domof(\Gamma)
\end{eqnarray*}

A heap $h$ is \emph{well-typed} with respect to a heap typing $\Pi$
and implicitly a declarative class table $(\TPool_\mathsf{d},F,M_\mathsf{d})$, written
$\SemHeap{\Pi}{h}$, if the following conditions hold: 
\begin{enumerate}
\item at each location, non-string objects are only typed with types
  in the relevant type set~$\TPool_\mathsf{d}$: for any $l\in\dom(\TPool_\mathsf{d})$,
  if $\Pi(l)=C_r$ and $C\neq\KString$, then $C_{\set{r}}\in\TPool_\mathsf{d}$, 
\item the object at each location is ``valid'' with respect to the
  type predicted by $\Pi$ for that location:
  $l\in\domof(h) \text{ and } \SemHeapVal {\Pi} {h(l)} {\Pi(l)},
  \text{ for all $l\in\domof(\Pi)$}$, where
  \begin{eqnarray*}
    \SemHeapVal{\Pi}{(C,F')}{D_r} & \text{ iff } 
    & C = D, \domof(F') = \fList(C), \text{ and } 
    \\ 
    & & \WellFormed \Pi {F'(f)} {F(f,C_r)}, \text{ for all $f\in \fList(C)$},
    \\
    \SemHeapVal{\Pi}{(\Str,w)}{\KString_u} & \text{ iff } & [w] = u.
\end{eqnarray*}
\end{enumerate}
Note that the condition $C = D$ is logically equivalent to the
requirement that if $\Pi(l) = C_r$ then $\classOfVal_h(l) = C$.

The following lemma follows from the well-typedness definition of
heaps.

\begin{lemma}\label{lem:downcast}
  For any heap~$h$, heap typing~$\Pi$, value~$v$, class~$C$, and
  region set~$R$, if $\SemHeap{\Pi}{h}$ and $\WellFormed{\Pi}{v}{C_R}$,
  then $\WellFormed{\Pi}{v}{D_R}$, where $D=\classOfVal_h(v)$.
\end{lemma}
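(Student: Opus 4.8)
The plan is to proceed by case analysis on the value $v$ following the definition of $\classOfVal_h$, treating $v=\NullVal$ and $v$ a location separately. The key technical ingredient, which I would isolate as a small \emph{inversion} observation, is that a value typing of a location pins down its type in $\Pi$ up to subtyping: if $\WellFormed{\Pi}{l}{C_R}$ for some $l\in\Locs$, then $\Pi(l)=E_r$ for some class $E$ and region $r$ with $E\subclassOf C$ and $r\in R$. I would prove this by induction on the derivation of $\WellFormed{\Pi}{l}{C_R}$. The null rule cannot conclude a judgment about a location, so the base case is the location rule, where $\Pi(l)=C_r$ and $R=\set{r}$, giving $E=C$ and $r\in R$ immediately. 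The only remaining case is subsumption, from $\WellFormed{\Pi}{l}{C'_{R'}}$ with $C'_{R'}\subtypeOf C_R$: the induction hypothesis yields $\Pi(l)=E_r$ with $E\subclassOf C'$ and $r\in R'$, and unfolding $\subtypeOf$ (namely $C'\subclassOf C$ and $R'\subseteq R$) gives $E\subclassOf C$ and $r\in R$ by transitivity of $\subclassOf$ and of $\subseteq$.

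With this fact in hand the two cases are routine. If $v=\NullVal$, then $D=\classOfVal_h(v)=\NullType$, and from the base rule $\WellFormed{\Pi}{\NullVal}{\NullType_\emptyset}$ together with $\NullType_\emptyset\subtypeOf\NullType_R$ (as $\emptyset\subseteq R$), subsumption gives $\WellFormed{\Pi}{v}{\NullType_R}=\WellFormed{\Pi}{v}{D_R}$. If $v=l$ is a location, the inversion fact gives $\Pi(l)=E_r$ with $E\subclassOf C$ and $r\in R$. Since $\SemHeap{\Pi}{h}$, the second well-typedness condition yields $l\in\domof(h)$ and $\SemHeapVal{\Pi}{h(l)}{E_r}$. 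Inspecting the two shapes of $h(l)$: if $h(l)=(C',F')$ is an ordinary object, the definition of $\SemHeapVal$ forces $C'=E$, hence $D=\classOfVal_h(l)=C'=E$; if $h(l)=(\Str,w)$ is a string object, then $\SemHeapVal{\Pi}{h(l)}{E_r}$ can only hold when $E=\KString$, and again $D=\classOfVal_h(l)=\KString=E$. In both subcases $D=E$. Finally, the location rule gives $\WellFormed{\Pi}{l}{E_r}$, and since $r\in R$ we have $E_r\subtypeOf E_R$, so subsumption yields $\WellFormed{\Pi}{l}{E_R}=\WellFormed{\Pi}{l}{D_R}$, as required.

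The main obstacle is the inversion step: one must peel off the (possibly iterated) uses of the subsumption rule in the derivation of $\WellFormed{\Pi}{l}{C_R}$ to recover the \emph{exact} atomic type $\Pi(l)=E_r$ and verify $E\subclassOf C$ and $r\in R$. Once this is established, the remainder is a direct appeal to the second clause in the definition of heap well-typedness (which ties $\Pi(l)$ to the actual class $\classOfVal_h(l)$) followed by a single further use of subsumption to widen the singleton region $\set{r}$ back to $R$.
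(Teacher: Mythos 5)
Your proof is correct and follows essentially the same route as the paper's: invert the value-typing derivation to pin down $\Pi(v)=E_r$ with $E\subclassOf C$ and $r\in R$, use heap well-typedness to conclude $D=E$, and finish by subsumption. The only difference is that you spell out in full the details the paper leaves implicit (the induction behind the inversion fact and the case split on ordinary versus string objects), which is fine.
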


\begin{proof}
  The case $v=\NullVal$ is trivial, so assume $v\in\Locs$.
  Let $\Pi(v) = E_r$ for some class $E$ and region $r$. 
  From $\WellFormed{\Pi}{v}{C_R}$ we obtain, by rule
  inversion, that $E_r \subtypeOf C_R$. That is, $E\preceq C$ and $r\in R$.
  From $\SemHeap{\Pi}{h}$, we obtain that $D = E$. Thus
  $\WellFormed{\Pi}{v}{D_r}$ by definition and
  $\WellFormed{\Pi}{v}{D_R}$ by subsumption.
\end{proof}

As the memory locations are determined at runtime, the heap typings
cannot be derived statically.  Instead, our interpretation of the
typing judgment $\FType{\Gamma}{e}{\tau}{U}$ states that whenever a
well-typed program is executed on a heap that is well-typed with
respect to some typing $\Pi$, then the final heap after the execution
is well-typed with respect to some possibly larger heap typing $\Pi'$.
The typing $\Pi'$ may be larger to account for new objects that may
have been allocated during execution, but the type of locations that
already existed in $\Pi$ may not change.  More formally, a heap typing
$\Pi'$ \emph{extends} a heap typing $\Pi$, written
$\HTextends{\Pi'}{\Pi}$, if $\domof(\Pi)\subseteq\domof(\Pi')$ and
$\Pi(l)=\Pi'(l)$, for all $l\in\domof(\Pi)$.

\begin{theorem}[Soundness Theorem]\label{thm:decl-soun}
  Let $\Prg$ be a well-typed program. For all $\Pi$, $\Gamma$, $\tau$,
  $s$, $h$, $e$, $v$, $h'$, $w$ with
  \begin{displaymath}
      \FType{\Gamma}{e}{\tau}{U} \quad\text{and}\quad
      \WellTyped \Pi s \Gamma \quad\text{and}\quad
      \SemHeap{\Pi}{h} \quad\text{and}\quad
      \FSem{s}{h}{e}{v}{h'}{w}\\
  \end{displaymath}
  there exists some $\HTextends {\Pi'} \Pi$ such that
  \begin{displaymath}
    \WellTyped {\Pi'} v \tau \quad\text{and}\quad 
    \SemHeap{\Pi'}{h'}
    \quad\text{and}\quad [w]\in U.
  \end{displaymath}
\end{theorem}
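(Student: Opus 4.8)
The plan is to prove the theorem by a \emph{lexicographic induction}: the primary induction is on the derivation of the big-step evaluation $\FSem{s}{h}{e}{v}{h'}{w}$, and the secondary induction is on the derivation of the typing judgment $\FType{\Gamma}{e}{\tau}{U}$. The secondary induction serves only to absorb the non-syntax-directed subsumption rule $\rulename{TD-Sub}$: if the typing derivation ends in $\rulename{TD-Sub}$, I apply the secondary induction hypothesis to its premise $\FType{\Gamma}{e}{\tau_0}{U_0}$ with $\tau_0\subtypeOf\tau$ and $U_0\subseteq U$, obtaining $\Pi'$ with $\WellTyped{\Pi'}{v}{\tau_0}$ and $[w]\in U_0$; value subsumption then yields $\WellTyped{\Pi'}{v}{\tau}$, and $[w]\in U_0\subseteq U$. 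Once subsumption is peeled off, every remaining typing rule is determined by the shape of $e$ and matches exactly one evaluation rule, so I proceed case by case on the evaluation rule and feed its evaluation sub-derivations into the primary induction hypothesis. Organizing the induction on evaluations rather than on expressions is essential, because in the method-call case the body is a strictly smaller evaluation sub-derivation even though it is a syntactically different (and possibly larger) expression.

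The structural cases are mostly routine but rely on a few recurring ideas. For $\rulename{TD-Var}$ and $\rulename{TD-Null}$ I take $\Pi'=\Pi$ and read the conclusion off from $\WellTyped{\Pi}{s}{\Gamma}$ and the definitions, using $[\varepsilon]=\monneutral$. For $\rulename{TD-Let}$ I chain the hypotheses: apply to $e_1$ for $\Pi_1\sqsupseteq\Pi$, then to $e_2$ under the extended store and heap typing $\Pi_1$ for $\Pi'\sqsupseteq\Pi_1$; the obligation $[w_1 w_2]\in U U'$ holds because $[-]$ is a homomorphism, so $[w_1 w_2]=[w_1][w_2]$, and $U U'=\setx{u u'}{u\in U, u'\in U'}$. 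The same homomorphism argument settles $\rulename{TD-Concat}$ and $\rulename{TD-Lit}$, where the fresh string object is tagged by $\wordoflit$ so that the clause $\SemHeapVal{\Pi'}{(\Str,w)}{\KString_u}$ (i.e.\ $[w]=u$) holds by construction. In $\rulename{TD-If}$ the key point is that the guard $s(x)=s(y)$ in the \syntax{then} branch forces both variables to denote the same value, which, since $\Pi$ assigns a single region to each location, is simultaneously typeable with $R$ and with $S$, hence with $R\cap S$, justifying the sharpened context. For $\rulename{TD-New}$ I set $\Pi'=\Pi[l\mapsto C_r]$ at the fresh $l$; all fields are $\NullVal$ and thus typeable at any field type via $\NullType_\emptyset\subtypeOf F(f,C_r)$, and $C_r\in\TPool_\mathsf{d}$ comes from the rule premise, so $\SemHeap{\Pi'}{h'}$ holds. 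The field rules $\rulename{TD-GetF}$ and $\rulename{TD-SetF}$ read off, respectively preserve, the field invariants recorded in $\SemHeapVal{\Pi}{h(l)}{\Pi(l)}$. The cast rule $\rulename{TD-Cast}$ splits on direction: an upcast is immediate by value subsumption, while a downcast invokes Lemma~\ref{lem:downcast} to retype $v$ at its actual class $\classOfVal_h(v)$, which the evaluation premise guarantees lies below the target, again closing by subsumption.

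The builtin case $\rulename{TD-Builtin}$ is where the guideline bookkeeping is discharged. The evaluation invokes $\sem(\fn)$, producing a result string tagged $w$ and an emitted trace $w'$; the consistency requirement relating $M(\fn)$ to $\sem(\fn)$ then guarantees $[w]\in U$ and $[w']\in U'$ for the pair $(U,U')$ selected by $M(\fn)$ on the argument-tag abstractions. I extend $\Pi$ at the fresh string location with $\KString_{[w]}$ and conclude.

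I expect the method-invocation case $\rulename{TD-Invoke}$ to be the main obstacle, as it combines nearly all the machinery. From the evaluation, the receiver $x$ denotes a location $l$ with $h(l)=(C',\_)$; applying Lemma~\ref{lem:downcast} to $\WellTyped{\Pi}{s(x)}{C_R}$ together with $\SemHeap{\Pi}{h}$ pins down $\Pi(l)=C'_{r}$ for a \emph{specific} $r\in R$ with $C'\subclassOf C$, and heap well-typedness additionally gives $C'_r\in\TPool_\mathsf{d}$. Instantiating the $\rulename{TD-Invoke}$ premise at this $r$ yields an entry $(\seq\sigma',\tau',U')\in M_\mathsf{d}(m,C_r)$ with $(\seq\sigma',\tau',U')\msqsubseteq\MethodTP{\seq\sigma}{\tau}{U}$; the overriding (well-formedness) condition on $M_\mathsf{d}$ transports this to an entry at the \emph{dynamic} class $C'_r$, which, composed with the previous $\msqsubseteq$ by transitivity, is still $\msqsubseteq\MethodTP{\seq\sigma}{\tau}{U}$. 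Program well-typedness now supplies a derivation $\FType{[\this\mapsto C'_r,\,x^m_i\mapsto\sigma'_i]}{\mTable(C',m)}{\tau'}{U'}$ for the body that the evaluation actually runs. The delicate work is verifying that the store built by the evaluation rule inhabits this context: $\this$ is bound to $l$, typeable at $C'_r$, and each argument is typeable at $\sigma'_i$ because the contravariance built into $\msqsubseteq$ gives $\sigma_i\subtypeOf\sigma'_i$. I then apply the primary induction hypothesis to the body's evaluation sub-derivation, obtaining $\Pi'\sqsupseteq\Pi$ with the returned value typed $\tau'$, with $\SemHeap{\Pi'}{h'}$, and with $[w]\in U'$; finally the covariant part of $\msqsubseteq$ ($\tau'\subtypeOf\tau$ and $U'\subseteq U$) lifts these to $\WellTyped{\Pi'}{v}{\tau}$ and $[w]\in U$. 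The two subtle points here are the correct alignment of the selected method type with the dynamic class via the $M_\mathsf{d}$ overriding condition, and checking that Lemma~\ref{lem:downcast} really delivers a region in $R$ so that the $\rulename{TD-Invoke}$ premise is applicable.
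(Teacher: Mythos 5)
Your proposal follows essentially the same route as the paper's own proof: an induction combining the evaluation and typing derivations (the paper uses the sum of their depths, you use a lexicographic order --- equivalent in effect, and yours is arguably the cleaner formulation for the method-call case, where the body's typing derivation need not be smaller), peeling off \rulename{TD-Sub} first, then a case analysis on the evaluation rule with the same key steps: fresh-location heap-typing extension for \rulename{TD-New}, pinning down the dynamic class via heap well-typedness and transporting the method entry via the $M_\mathsf{d}$ overriding condition for \rulename{TD-Invoke}, the $\sem(\fn)$/$M(\fn)$ consistency requirement for \rulename{TD-Builtin}, and Lemma~\ref{lem:downcast} plus subsumption for \rulename{TD-Cast}. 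No gaps; the cases you spell out that the paper omits as ``similar and simpler'' are handled correctly.
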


Note that Theorem~\ref{thm:decl-sound-shortened} is a
corollary of this theorem.

\begin{proof} 
  In~\cite{comlan} a version of this theorem without effects and
  builtin functions is presented. The result here follows similarly
  by induction over the sum of the depth of the derivation of the
  operational semantics judgment and the depth depth of the derivation
  of the typing judgment.
  %

  First, let us consider the case where $\FType{\Gamma}{e}{\tau}{U}$
  has been derived by subtyping rule. By rule inversion we get
  $\FType{\Gamma}{e}{\tau'}{U'}$ where $\tau'\subtypeOf \tau$ and
  $U' \subseteq U$. By induction we 
  obtain $\WellTyped {\Pi'} v \tau'$ and $\SemHeap{\Pi'}{h'}$
  and $[w]\in U'$ for some $\HTextends {\Pi'} \Pi$. Since we
  have $\tau'\subtypeOf \tau$ and $U' \subseteq U$, we deduce that 
  $\WellTyped {\Pi'} v \tau'$ and $[w]\in U'$.  

  Next, we assume that the subtyping rule is not the last used rule of
  the typing judgment.
  We perform a case distinction over the last rule used in the
  derivation of operational semantics judgment.
  In the rest of the proof, we only consider the cases of object
  allocations, type casts, and (internal or external) method
  invocations, the other cases being similar and simpler.
  We assume that $P$ is well-typed w.r.t. the declarative class table
  $(\TPool_\mathsf{d},F,M_\mathsf{d})$.

  \begin{itemize}

  \item $\FSem{s}{h}{\New C}{l}{h[l\mapsto (C,F)]}{\varepsilon}$.

    \smallskip

    The typing judgement is
    $\FType{\Gamma}{\New{C}}{C_{\set{r}}}{\set{\monneutral}}$ and thus
    $C_{\set{r}}\in\TPool_\mathsf{d}$. Also, from semantic rule we have
    $h'=h[l\mapsto(C,F)]$ where $l\not\in\domof(h)$ and $F =
    [f\mapsto\NullVal]_{f\in\mathit{fields}(C)}$. Since
    $\domof(\Pi)\subseteq\domof(h')$, we have
    $l\not\in\domof(\Pi)$. We choose $\Pi'=\Pi[l\mapsto (C,r)]$, and
    thus have $\WellFormed \Pi l {C_r}$.  By definition we also have
    $\classOfVal_h(l) = C$.  To show $\SemHeap{\Pi'}{h}$, it suffices
    to show $\SemHeapVal {\Pi'} {h(l)} {C_r}$.  This follows
    trivially, as $h'(l)=(C,F)$ and $C=C$ and $F(f)=\NullVal$ for all
    $f\in\domof(F)$.

    \smallskip

  \item $\FSem{s}{h}{x.m(\seq y)}{v}{h'}{w}$.

    \smallskip

    By rule inversion, we know there is a location $l\in\domof(h)$
    such that $s(x)=l$ and $h(l)=(D,\_)$, and
    $\FSem{s'}{h}{\mathit{mtable}(D,m)}{v}{h'}{w}$ where
    $s'=[this\mapsto l] \cup [x^m_i\mapsto
      s(y_i)]_{i\in\set{1,\ldots,\ar(m)}}$.
    
    \smallskip

    The typing judgment is $\FType{\Gamma,x:C_R,\seq
      y:\seq{\sigma}}{x.m(\seq y)}{\tau}{U}$.  As $\WellFormed \Pi s
    {(\Gamma,x:C_R,\seq y:\seq\sigma)}$, we have $\WellFormed \Pi l
    {C_R}$. By definition of well-typed values w.r.t.~$\Pi$, there
    must exist some class $E$ and some region $r$ with $\Pi(l)=E_r$
    such that $r\in R$ and $E\subclassOf C$.  We can derive
    $\WellFormed \Pi l {C_{\{r\}}}$.  With $\SemHeap{\Pi}{h}$, we
    infer that $D=E$ and thus $D\subclassOf C$.  Also, as
    $\SemHeap{\Pi}{h}$, we have that $E_r\in \TPool_\mathsf{d}$, that is,
    $D_r\in \TPool_\mathsf{d}$.

    \smallskip

    By rule \rulename{TD-Invoke}, we know there exists a method typing
    $\MethodTP{\seq\sigma'}{\tau'}{U'} \in M_\mathsf{d}(m, C_r)$ such that
    $(\bar\sigma', \tau',U') \msqsubseteq (\bar\sigma, \tau, U)$.  As
    $D\subclassOf C$, $D_r\in \TPool_\mathsf{d}$, and the class table is
    well-formed, there is a method typing
    $\MethodTP{\seq\sigma''}{\tau''}{U} \in M_\mathsf{d}(m, D_r)$ such that
    $(\bar\sigma'', \tau'',U'') \msqsubseteq (\bar\sigma', \tau', U')$.
    From $\WellFormed \Pi s {(\Gamma,x:C_R,\seq y:\seq\sigma)}$, it
    follows $\WellFormed \Pi {s(y_i)} {\sigma_i''}$ for
    $i\in\{1,\ldots,\ar(m)\}$.

    \smallskip

    As $\Prg$ is well-typed, we get
    $\FType{\Gamma'}{\mathit{mtable}(D,m)}{\tau''}{U}$ where
    $\Gamma'=[this\mapsto C_{\{r\}}] \cup
    [x^m_i\mapsto\sigma_i'']_{i\in\{1,\ldots,\ar(m)\}}$. From the
    facts from above, we get $\WellFormed \Pi {s'}{\Gamma'}$, so we
    can finally apply the theorem inductively on the derivation of the
    semantics and get $[w]\in U$, and $\WellFormed {\Pi'} v {\tau''}$
    and $\SemHeap{\Pi'}{h'}$ for some $\HTextends{\Pi'} \Pi$. From
    $\tau''\subtypeOf\tau$ we obtain $\WellFormed {\Pi'} v \tau$.

    \smallskip

  \item ${\FSem{s}{h}{\fn(\seq y)}{l}{h'}{w'}}$, where $\fn$ is an
    external method.

    \smallskip

    By rule inversion, there is a location $l \notin \domof(h)$, a
    string literal $\Str\in\Strings$ and a words $w\in\Sigma^*$ such
    that $h' = h[l \mapsto (\Str,w)]$ and $(\Str, w, w') \in
    \sem(\fn)((\Str_1,w_1), \ldots, (\Str_n,w_n))$, where $n =
    \ar(\fn)$ and $h(s(y_i)) = (\Str_i, w_i)$, for each $i$ with $i
    \in \{1, \ldots, n\}$.

    \smallskip

    From $\WellFormed \Pi s \Gamma$, it follows $\WellFormed \Pi
    {s(y_i)} {\KString_{R_i}}$.
    By definition of well-typed values w.r.t.~$\Pi$, there must
    exist some region $r_i$ with $\Pi(s(y_i))=\KString_{r_i}$ and
    $r_i\in R_i$.
    Furthermore, as $\SemHeap{\Pi}{h}$, we obtain that $[w_i]=r_i$.
    Then, from the properties of the typing function~$M$ (see
    Section~\ref{sec:policy}), we get that $[w]\in R$ and $[w']\in U$,
    where $(R,U) = M(\fn)(r_1, \dots, r_n)$.

    \smallskip

    We let $\Pi'=\Pi[l\mapsto (\KString,[w])]$. Then, as
    $\SemHeap{\Pi}{h}$, we directly get that ${\SemHeap{\Pi'}{h'}}$.
    Assume that the typing judgment is ${\FType{\Gamma}{\fn(\seq
        y)}{\KString_{R'}}{U'}}$ for some $R'\subseteq\Regions$ and
    $U'\in\Effects$.
    Then, by the rule \rulename{TD-Builtin}, we have
    $M(\fn)(r_1,\dots,r_n) \sqsubseteq (R', U')$.
    That is, $(R,U) \sqsubseteq (R', U')$. Said otherwise
    $R\subseteq R'$ and $U \subseteq U'$.
    It then follows that $[w]\in R'$ and $[w']\in U'$. The former
    conjunct shows that $\WellTyped{\Pi'}{l}{\KString_{R'}}$.

    \smallskip

  \item ${\FSem{s}{h}{\Cast{e}{C}}{v}{h}{w}}$.

    The typing judgment is
    $\FType{\Gamma}{\Cast{e}{C}}{C_R}{U}$. Using rule inversion we get
    that $\FType{\Gamma}{e}{D_R}{U}$.
    By the induction hypothesis, there is $\HTextends{\Pi'}{\Pi}$ such
    that $\WellTyped{\Pi'}{v}{D_R}$, $\SemHeap{\Pi'}{h}$, $[w]\in U$.
    By applying Lemma~\ref{lem:downcast} we get
    $\WellTyped{\Pi'}{v}{E_R}$, where $E = \classOfVal_h(v)$. As
    $\classOfVal_h(v) \preceq C$ from the premise of the operational
    semantics rule that was used for this case, we obtain
    $\WellTyped{\Pi'}{v}{C_R}$ by subsumption.
  \end{itemize}
\end{proof}

\subsection{Algorithmic Type and Effect System}
\label{app:alg-system}

Figure~\ref{fig:type-system-algorithmic-complete} lists the complete
rules of the algorithmic type system.
The rules
are in essence more specialized versions of the ones in the
semi-declarative system. Notice that in the rules \rulename{TA-If},
\rulename{TA-Invoke} and \rulename{TA-GetF} we compute the least upper
bound, i.e. join, of two or more types in premises.

\begin{figure}
  \begin{mathpar}
    \Rule{TA-Var}
         { }
         {\AType{\Gamma,x:\tau}{z}{x}{\tau}{\{\monneutral\}}}
    \and
    \Rule{TA-If}
         {\AType{\Gamma, x: C_{R \cap S}, y : D_{R \cap S}}{z}{e_1}{\tau}{U} \\ 
          \AType{\Gamma, x: C_{R}, y : D_{S}}{z}{e_2}{\tau'}{U'}\\
        }
         {\AType{\Gamma}{z}{\IfEqual{x}{y}{e_1}{e_2}}{\tau \sqcup \tau'}{U \cup U'}}
    \and
    \Rule{TA-Let}
         {\AType{\Gamma}{z}{e_1}{\tau}{U} \\
           \AType{\Gamma,x:\tau}{z}{e_2}{\tau'}{U'}}
         {\AType{\Gamma}{z}{\Let{x}{e_1}{e_2}}{\tau'}{UU'}}
    \and
    \Rule{TA-Null}
         { }
         {\AType{\Gamma}{z}{\Null}{\NullType_{\emptyset}}{\{\monneutral\}}}
    \and
    \Rule{TA-New}
         {r=\psi(z,i) \\ C_{r}\in\TPool_\mathsf{a}(z)}
         {\AType{\Gamma}{z}{\LExpr{\New C}{i}}{C_{\set{r}}}{\set{\monneutral}}}
    \and
    \Rule{TA-Cast}
         {\AType{\Gamma}{z}{e}{C_R}{U} \\ C\subtypeOf D \text{ or } D\subtypeOf C}
         {\AType{\Gamma}{z}{\Cast{e}{D}}{D_R}{U}}
    \and
    \Rule{TA-Invoke}
       { 
         (\tau, U) = \bsqcup\setx{M_\mathsf{a}(m, z', C_r, \bar{\sigma})}{r\in R, z'=\phi(z,C,r,m,i)}}
       {\AType{\Gamma,x:C_R,\seq y:\seq{\sigma}}{z}{\LExpr{x.m(\seq y)}{i}}{\tau}{U}}
    \and
    \Rule{TA-Builtin}
         {\ar(\fn)=n \quad \text{$\Gamma(x_1) = \KString_{R_1}, \ldots, 
             \Gamma(x_n) = \KString_{R_n}$} \\\\  
           (\tau,U) = \bsqcup\setx{M(\fn) (r_1, \ldots, r_n)}{r_1 \in R_1, \ldots, r_n \in R_n}}
         {\AType{\Gamma}{z}{\fn(x_1, \dots, x_n)}{\tau}{U}}
    \and
    \Rule{TA-GetF}
         {\tau = \bsqcup\setx{F(f,C_r)}{r \in R}}
         {\AType{\Gamma,x:C_R}{z}{x.f}{\tau}{\{\monneutral\}}}
    \and
    \Rule{TA-SetF}
         {\tau \subtypeOf F(f,C_r),\ \text{for all $r\in R$}}
         {\AType{\Gamma,x:C,y:\tau}{z}{\Assign{x.f}{y}}{\tau}{\{\monneutral\}}}
    \and
    \Rule{TA-Lit}
         {\wordoflit(\Str) = w}
         {\AType{\Gamma}{z}{"\Str"}{\KString_{\{[w]\}}}{\{\monneutral\}}}
    \and
    \Rule{TA-Concat}
         {\Gamma(x_1) = \KString_U \\ \Gamma(x_2) = \KString_{U'}}
         {\AType{\Gamma}{z}{x_1 + x_2}{\KString_{UU'}}{\{\monneutral\}}}
  \end{mathpar}

  \caption{The \FJEUS algorithmic type system}
  \label{fig:type-system-algorithmic-complete}
\end{figure}

\subsubsection{Proof of Completeness}

\begin{proof}[of Theorem~\ref{thm:algocomplete}]
Given a semi-declarative method typing $M_\mathsf{s}$ we define $\Phi(M_\mathsf{s})$ as 
the semi-declarative method typing obtained by 
``best'' typing of all the methods according to $M_\mathsf{s}$.
That is, $\Phi(M_\mathsf{s})$ returns the most precise type (i.e. the smallest type 
w.r.t.~$\msqsubseteq$ in Section~\ref{sec:ctables_semi}) for each 
method $m$ in $\Prg$. 
Well-typedness w.r.t. $(\TPool_\mathsf{s},F,M_\mathsf{s})$ essentially says that 
$\Phi(M_\mathsf{s}) \sqsubseteq M_\mathsf{s}$.
Accordingly, if $M_{\infty}$ is the least fixpoint of $\Phi$ then 
$M_{\infty} \sqsubseteq M_\mathsf{s}$.

Now, $M_{\infty}$ is obtained as the limit of the chain $\bot := M_{s_0} 
\sqsubseteq \Phi(\bot):=M_{s_1} \sqsubseteq  M_{s_2} \ldots$  
where $M_{s_{(i+1)}} = \Phi(M_{s_i})$.

Induction on $i$ shows that each typing $M_{s_i}$ is 
actually a function, i.e.~for each method $m$, each context $z$,
each class~$C$, each region $r$, and each $\bar\sigma$ there 
is \emph{exactly one} region type and effect $(\tau, U)$ such that 
$(\bar\sigma,\tau,U) = M_{s_i}(m,z,C_r).$
Now, using the definition of $\atoms(\bar\sigma)$ we can 
construct $M_\mathsf{a}$ as $(\tau, U) = M_\mathsf{a}(m,z,C_r, \bar\sigma_\mathsf{a})$.  
Therefore, we can restrict attention to algorithmic method tables 
throughout. 

Regarding well-typedness, from the construction of $M_\mathsf{a}$ 
it follows that if $P$ is well-typed w.r.t. $(\TPool_\mathsf{s}, F, M_\mathsf{s})$ then 
it is also well-typed w.r.t $(\TPool_\mathsf{s}, F, M_\mathsf{a})$. \qed
\end{proof}

\section{Details of the Type Inference Algorithm}
\label{app:alg}

From the algorithmic type system in Section~\ref{sec:algo}, a type inference
algorithm can easily be constructed by reading them as a functional program.
This algorithm computes the type
and effects of any expression~$e$, given a typing context~$\Gamma$, a
method call context~$z$, a class table $(\TPool_\mathsf{a},F_\mathsf{a},M_\mathsf{a})$, and the
implicit type systems' parameters. We defer the rather obvious details to Appendix~\ref{app:pseudocode} and henceforth, assume the existence of
a procedure~$\te$, which, when called as
$\te(e,\Gamma,z,(\TPool_\mathsf{a},F_\mathsf{a},M_\mathsf{a}))$, returns a tuple~$(\tau,U)$.

We present next a more general type inference algorithm,
denoted~$\alg$, that infers an algorithmic class table for a given
program~$P$, provided the standard Java types of the program's methods
and fields are also given in form of a class table~$(F_0,M_0)$.
As output it returns an (algorithmic) class table~$(\TPool_\mathsf{a},F_\mathsf{a},M_\mathsf{a})$
such that $P$ is well-typed with respect to it.
Thus the algorithm can be readily used to check whether an
expression~$e$ follows a guideline. Indeed, it is sufficient to call
$\te$ on the expression~$e$ for the newly computed class table and
check whether $U\subseteq\Allowed$, where $(\tau,U)$ is the tuple
returned by the call to $\te$.

\begin{listing}[htb]
\begin{lstlisting}[language=myML]
proc $\alg(P, (F_0,M_0))$
  $(\TPool_\mathsf{a},F_\mathsf{a}, M_\mathsf{a})$ := lift($P$, $(F_0, M_0$))
  do  
    $(F', M')$ := $(\TPool',F_\mathsf{a}, M_\mathsf{a})$ 
    foreach $(m,z,C_r,\seq{\sigma}) \in \domof(M_\mathsf{a})$ with $(C,m)\in\dom(\mTable)$
      $e$ := $\mTable(C,m)$
      $\Gamma$ := $[\this\mapsto C_{r}]\cup[ x^m_i\mapsto\sigma_i]_{i\in{\{1,\ldots,\ar(m)\}}}$
      $(F_\mathsf{a},\tau,U)$ := $\tep(e, \Gamma, z, (F_\mathsf{a}, M_\mathsf{a}))$
      $M_\mathsf{a}(m,z,C_r,\seq{\sigma})$ := $M_\mathsf{a}(m,z,C_r,\seq{\sigma})\sqcup (\tau,U)$
      $(F_\mathsf{a}, M_\mathsf{a})$ := checkClassTable($F_\mathsf{a}$, $M_\mathsf{a}$)
  until $(F',M') = (F_\mathsf{a},M_\mathsf{a})$
  return $(F_\mathsf{a},M_\mathsf{a})$
\end{lstlisting}
\caption{The interprocedural analysis.}
\label{code:inter}
\end{listing}

The pseudo-code of the algorithm~$\alg$ is given in
Listing~\ref{code:inter}.
The algorithm computes the field and method typings $(F_\mathsf{a},M_\mathsf{a})$
iteratively, by a fix-point computation.  The tuple $(F_\mathsf{a},M_\mathsf{a})$ is
initialized with the standard class table, by lifting it to an
algorithmic class table. More precisely, initially all entries in the
tables $F_\mathsf{a}$ and $M_\mathsf{a}$ are set to the lowest possible elements
of~$\Lattice$ (w.r.t.~$\sqsubseteq$). For instance,
$F_\mathsf{a}(f,C_r)=(D_\emptyset,\set{\monneutral})$, where $D = M_0(C,f)$,
for any class~$C$ and field $f\in\fList(C)$. Also, the set~$\TPool_\mathsf{a}$
of relevant refined types is simply obtained by inspecting all
allocations in $P$ in all contexts and adding the corresponding
refined types to $\TPool_\mathsf{a}$, and then taking the closure of this set
by "supertyping".

Next, the fix-point is iteratively computed with a \ls{do-until}
loop. Intuitively, in each iteration, all method bodies of the program
are checked against the relevant entries of the method typing~$M_\mathsf{a}$
and if the check fails than the corresponding entries are updated and
their types ``weakened''. This is performed in the pseudo-code as
follows. For each method typing entry, the type and effect of the
corresponding method body is computed using a variant $\tep$ of the
type inference algorithm~$\te$.

The variant~$\tep$ behaves exactly as $\te$, except it also possibly
updates the table~$F_\mathsf{a}$. This may happen for field write
sub-expressions $\Assign{x.f}{y}$. Recall that the corresponding type
rule requires that~$y$ must have a subtype of the types allowed for
the field~$f$ by the $F_\mathsf{a}$ table with respect to each possible region
$r \in R$, where $\Gamma(x)=C_R$. If this condition is not satisfied
then $F_\mathsf{a}$ is updated such that it is satisfied, by weakening the
offending entry with the type of~$y$. Formally, if
$\Gamma(y)\not\subtypeOf F_\mathsf{a}(f,C_r)$ for some $r\in R$, then
$F_\mathsf{a}(f,C_r)$ is set to $F_\mathsf{a}(f,C_r) \sqcup \Gamma(y)$.

Once the type and effect $(\tau,U)$ of the analyzed method body is
computed, then the current entry of $M_\mathsf{a}$ is updated if
needed. Finally, the \ls{checkClassTable} procedure checks whether the
tuple $(F_\mathsf{a},M_\mathsf{a})$ satisfies the constraints from the definition of
algorithmic class tables and if not it updates the offending entries
so that the constraints are satisfied. For instance, if $C\preceq D$
and $F_\mathsf{a}(f,C_r) \neq F_\mathsf{a}(f,D_r)$ for some field $f$ and region $r$,
then both $F_\mathsf{a}(f,C_r)$ and $F_\mathsf{a}(f,D_r)$ are set to
$F_\mathsf{a}(f,C_r) \sqcup F_\mathsf{a}(f,D_r)$.

The pseudo-code of the auxiliary procedures \ls{lift},
\ls{checkClassTable}, and $\tep$ can be found later in this section.

\begin{theorem}
  Let $P$ be a program that is well-typed with respect to some class
  table $(F_0,M_0)$. The procedure $\alg(P,(F_0,M_0))$ terminates
  and returns an algorithmic class table $(F_\mathsf{a},M_\mathsf{a})$ such that $P$ is
  well-typed with respect to $(F_\mathsf{a},M_\mathsf{a})$.
\end{theorem}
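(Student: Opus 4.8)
The plan is to read $\alg$ as a Kleene-style fixpoint iteration of a monotone, inflationary operator over a finite lattice, and to extract the three required properties — termination, validity of the returned class table, and well-typedness — from the fixpoint that this iteration computes. First I would fix the underlying order. Every entry of $F_\mathsf{a}$ and $M_\mathsf{a}$ ranges over $\Lattice=\Types\times\Effects$, ordered by $\sqsubseteq$, which is finite since $\Regions$, $\Monoid$, and $\Classes$ are finite; moreover the domains of $F_\mathsf{a}$ and $M_\mathsf{a}$ are finite (finitely many methods, contexts, atomic types, and argument tuples of the fixed arity $\ar(m)$). Hence the class tables $(F_\mathsf{a},M_\mathsf{a})$, ordered entrywise by $\sqsubseteq$, form a finite lattice of finite height, and the table produced by \strcl{lift} sets each entry to the least element of $\Lattice$ compatible with $(F_0,M_0)$, i.e.\ the bottom of this lattice.

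For termination I would show that a single pass of the \strcl{do}-loop body is both monotone and inflationary: each of its sub-steps can only raise entries w.r.t.\ $\sqsubseteq$. Indeed, the update $M_\mathsf{a}(m,z,C_r,\seq\sigma):=M_\mathsf{a}(m,z,C_r,\seq\sigma)\sqcup(\tau,U)$ is a join, the field updates performed inside $\tep$ for sub-expressions $\Assign{x.f}{y}$ replace $F_\mathsf{a}(f,C_r)$ by $F_\mathsf{a}(f,C_r)\sqcup\Gamma(y)$, and \strcl{checkClassTable} likewise only joins offending entries. Consequently the sequence of tables generated across iterations is $\sqsubseteq$-increasing; by finiteness it stabilises after finitely many steps, at which point the guard $(F',M')=(F_\mathsf{a},M_\mathsf{a})$ holds and the loop exits.

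It remains to analyse the returned table. At the fixpoint \strcl{checkClassTable} is a no-op, so $(\TPool_\mathsf{a},F_\mathsf{a},M_\mathsf{a})$ satisfies the field-invariance and method-monotonicity constraints of Section~\ref{sec:ctables_alg}; together with the fact that \strcl{lift} closes $\TPool_\mathsf{a}$ under supertyping, this shows the output is a genuine algorithmic class table. For well-typedness I would use that $\tep$ faithfully implements the deterministic, syntax-directed rules of Section~\ref{sec:rules_alg} and is monotone in its table argument, so that larger $F_\mathsf{a},M_\mathsf{a}$ yield larger results for $\AType{\Gamma}{z}{e}{\tau}{U}$. Because the iteration starts from the bottom table and every update joins in a monotone image, the join with the previous value is redundant, so each iterate coincides with the plain functional iterate and the computed tables are the Kleene approximants of the least fixpoint of the body-typing operator (this is the constructive counterpart of the least-fixpoint $M_\infty$ used in the proof of Theorem~\ref{thm:algocomplete}). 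At that least fixpoint, typing $\mTable(C,m)$ under $\Gamma=[\this\mapsto C_r]\cup[x^m_i\mapsto\sigma_i]_{i}$ returns \emph{exactly} the stored $M_\mathsf{a}(m,z,C_r,\seq\sigma)$, which is precisely the well-typedness condition.

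The hard part will be reconciling the join-based accumulation — and above all the constraint repair in \strcl{checkClassTable} — with the exact equality demanded by well-typedness: since the algorithmic judgment is deterministic, $\mTable(C,m)$ must derive the stored entry on the nose, not merely a subtype of it. The genuine worry is that, when repairing the overriding constraint $M_\mathsf{a}(m,z,C'_r,\seq\sigma')\sqsubseteq M_\mathsf{a}(m,z,C_r,\seq\sigma)$ for $C'\preceq C$ and $\seq\sigma\subtypeOf\seq\sigma'$, \strcl{checkClassTable} might raise the superclass entry above the type actually computed for the body $\mTable(C,m)$, since that body does not feed its own entry. The crux of the proof is therefore to show that such an overshoot cannot survive at the fixpoint — e.g.\ by arguing, from $P$'s well-typedness w.r.t.\ $(F_0,M_0)$ together with the monotonicity constraints, that the superclass body at the more specific arguments already dominates the subclass contributions, so that \strcl{checkClassTable} leaves the method entries unchanged at the fixpoint. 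Establishing the correctness and monotonicity of $\tep$ and this non-overshoot property is where the real work lies; the termination and class-table-validity parts are routine finite-lattice bookkeeping.
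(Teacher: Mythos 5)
Your proposal follows the same overall route as the paper's own proof, which is considerably terser than yours: the paper argues termination exactly as you do (every update can only ``worsen'' a table entry with respect to $\sqsubseteq$, each entry is bounded above by the top element $(\KObject_\Regions,\Powerset{\Monoid})$, and there are finitely many entries, so the \strcl{do}-\strcl{until} loop stabilises), and it then dispatches well-typedness in two sentences, namely that correctness ``relies on the correctness of the $\tep$ procedure, which is ensured by Theorem~\ref{thm:algo-sound}'', and that \strcl{checkClassTable} enforces the class-table constraints. Your termination and class-table-validity arguments are thus a faithful, more rigorous rendering of the paper's; note in passing that the paper's citation is itself shaky, since Theorem~\ref{thm:algo-sound} relates the algorithmic to the semi-declarative system and says nothing about $\tep$ implementing the algorithmic rules.

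The genuine divergence is in the last step, and there your diagnosis is better than your cure. The ``overshoot'' problem you isolate is real and is nowhere addressed in the paper's proof; but the non-overshoot property you propose to establish is actually \emph{false}, so that part of your plan cannot be carried out. Concretely, let $D\prec C$, let both classes implement $m$ by a single allocation $\New{\KObject}$, and let $\psi$ send the two allocation sites (in the relevant callee context) to distinct regions $r_1$ (in $C$'s body) and $r_2$ (in $D$'s body) --- exactly what happens under the implementation's choice $\psi(z,i)=(z,i)$. Any table that is stationary for the loop must have $M_\mathsf{a}$'s entry for $C$ dominating $\KObject_{\{r_1\}}$ and its entry for $D$ dominating $\KObject_{\{r_2\}}$, and the overriding constraint enforced by \strcl{checkClassTable} then forces $C$'s entry up to at least $\KObject_{\{r_1,r_2\}}$; yet the algorithmic system, having no subsumption rule, derives for $C$'s body only the type $\KObject_{\{r_1\}}$, forever strictly below the stored entry. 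Hence at any fixpoint one has $(\tau,U)\sqsubseteq M_\mathsf{a}(m,z,C_r,\seq\sigma)$ but in general not the equality that the definition of algorithmic well-typedness literally demands, and no argument --- neither yours nor the paper's --- can close that gap as stated. The correct repair is weakening rather than non-overshoot: what the algorithm delivers is the $\sqsubseteq$-version of well-typedness, and this suffices for the intended end-to-end guarantee because, after passing to the induced semi-declarative table as in Theorem~\ref{thm:algo-sound}, subsumption is again available and the possibly larger stored entries become derivable there. If you restate the theorem (or reread the well-typedness condition) in that form, your outline goes through; otherwise both your proof and the paper's fall short of the literal claim.
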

\begin{proof}
  For termination, we note that in each iteration of the \ls{do-until}
  loop at least one of the tables $F_\mathsf{a}$ and $M_\mathsf{a}$
  changes. Furthermore, each change can only ``worsen'' the type of
  the changed table entry. Finally, each entry can only be changed a
  finite number of types, as the last possible change leads to the
  tuple $(\KObject_\Regions,\Powerset{\Monoid})$, which cannot be
  ``worsened''.

  The correctness of the algorithm relies on the correctness of the
  $\tep$ procedure, which is ensured by
  Theorem~\ref{thm:algo-sound}. Finally, the \ls{checkClassTable}
  procedure ensures that the tuple $(F_\mathsf{a},M_\mathsf{a})$ satisfies the
  constrains from the definition of algorithmic class tables.
\end{proof}

\subsection{Pseudocode of the \textsf{typeff} Procedure}
\label{app:pseudocode}

The whole algorithm is presented as imperative style (with pattern matching)
pseudo-code in Listings~\ref{code:inter}, \ref{code:aux},
and~\ref{code:intra}.
In the pseudo-code, the symbol \ls{:=} denotes a pattern match
followed by an assignment. For instance, if at a point in the
pseudo-code $x$ is a fresh variable, while $y$ is already used, then
\ls{$(x,y)$ := $(a,b)$} first checks that $y$ equals $b$ and then
assigns $x$ to $a$. %
If the check fails, then an exception is raised.
The pseudocode of the procedure \ls{checkMethodTyping} is omitted,
being very similar to that of the \ls{checkFieldTyping} procedure.

\begin{listing}
\begin{lstlisting}[language=myML]
proc $\tep$($e$, $\Gamma$, $z$)
  match $e$ with
  $\mid x$ -> return $(\Gamma(x), \monneutral)$
  $\mid \Let{x}{e}{e'}$ ->
     $\tau, U$ := $\tep$($e$, $\Gamma$, $z$)
     $\tau', U'$ := $\tep$($e'$, $\Gamma[x\mapsto \tau]$, $z$)
     return $(\tau', UU')$
  $\mid \IfEqual{x}{y}{e_1}{e_2}$ ->
     return $\tep$($e_1$, $\Gamma$, $z$) $\sqcup$ $\tep$($e_2$, $\Gamma$, $z$)
  $\mid \Null$ -> return $(\NullType_\emptyset,\monneutral)$
  $\mid \LExpr{\New C}{i}$ -> return $(C_{\set{\psi(z,i)}}, \monneutral)$
  $\mid \Cast{e}{D}$ -> 
     $(C_R, U)$ := $\tep$($e$, $\Gamma$, $z$)
     if $C\preceq D$ or $D\preceq C$ then
       return $(D_R, U)$
     else raise "type error"
  $\mid x.f$ -> 
     $C_R$ := $\Gamma(x)$
     return $(\bsqcup \setx{F_\mathsf{a}(f,C_r)}{r\in R}, \monneutral)$
   $\mid \Assign{x.f}{y}$ ->
     $C_R$ := $\Gamma(x)$
     foreach $r\in R$ with $\Gamma(y)\not\subtypeOf F_\mathsf{a}(f,C_r)$
       $F_\mathsf{a}(f,C_r)$ := $F_\mathsf{a}(f,C_r) \sqcup \Gamma(y)$
     return $(\Gamma(y),\monneutral)$
  $\mid \LExpr{x.m(\seq y)}{i}$ ->
     $C_R$ := $\Gamma(x)$
     return $\bsqcup \setx{M_\mathsf{a}(m,z',C_r,\Gamma(\bar{y}))}{r\in R, z' = \phi(z,C,r,m,i)}$
  $\mid \fn(\seq y)$ ->
     foreach $i\in\set{1,\dots,|\seq y|}$
       $\KString_{R_i}$ := $\Gamma(y_i)$
     return $\bsqcup\setx{M(\fn) (r_1, \ldots, r_n)}{r_1 \in R_1, \ldots, r_n \in R_n}$
  $\mid "\Str"$ -> return $(\KString_{[\wordoflit(\Str)]}, \monneutral)$
  $\mid x + x'$ ->
     $\KString_U$ := $\Gamma(x)$
     $\KString_{U'}$ := $\Gamma(x')$
     return $(\KString_{UU'}, \monneutral)$
\end{lstlisting}
\caption{The type inference algorithm.}
\label{code:intra}
\end{listing}

\begin{listing}[htb]
\begin{lstlisting}[language=myML]
proc lift($P$, $(F_0, M_0)$)
  foreach $C\in \Classes(P)$
    foreach $m\in\mList(C)$
      $((D_1, \dots, D_{\ar(m)}), E)$ := $M_0(C, m)$
      foreach $r\in\Regions, z\in\Contexts, \bar{s}\in \Regions^{\ar(m)}$
        $\bar{\sigma}$ := $\big((D_1,s_1),\dots,(D_{\ar(m)},s_{\ar(m)})\big)$
        $M_\mathsf{a}(m,z,C_r,\bar{\sigma})$ := $(E_\emptyset, \set{\monneutral})$
    foreach $f\in\fList(C)$
      $D$ := $F_0(C, f)$
      foreach $r\in\Regions$
        $F_\mathsf{a}(f,C_r)$ := $D_\emptyset$
  $\TPool_\mathsf{a}$ := $\emptyset$
  foreach $\LExpr{\New{C}}{i}\in \mathsf{expr}(P)$, $z\in\Contexts$
    $\TPool_\mathsf{a}$ := $\TPool_\mathsf{a}\cup \set{C_{\set{\psi(z,i)}}}$
  $\TPool_\mathsf{a}$ := closure($\TPool_\mathsf{a}$)
  return $(\TPool_\mathsf{a},F_\mathsf{a},M_\mathsf{a})$

proc checkClassTable($F_\mathsf{a}$, $M_\mathsf{a}$)
  checkFieldTyping()
  checkMethodTyping()
  return $(F_\mathsf{a}, M_\mathsf{a})$

proc checkFieldTyping()
  updateF($\KObject$)

proc updateF($C$)
  foreach $(C',r,f)\in\domof(F_\mathsf{a})$, $D\prec C$ with $C=C'$
    makeEqual($C$, $D$, $r$, $f$)
    updateF($D$)
    makeEqual($C$, $D$, $r$, $f$)

proc makeEqual($C$, $D$, $r$, $f$)
  if $F_\mathsf{a}(f,C_r) \neq F_\mathsf{a}(f,D_r)$ then
    $F_\mathsf{a}(f,C_r) := F_\mathsf{a}(f,C_r) \sqcup F_\mathsf{a}(f,D_r)$
    $F_\mathsf{a}(f,D_r) := F_\mathsf{a}(f,C_r)$
\end{lstlisting}
\caption{Auxiliary procedures for the interprocedural analysis.}
\label{code:aux}
\end{listing}

Note that in an implementation the tables $M_\mathsf{a}$ and $F_\mathsf{a}$ would be
built on-the-fly. That is, they are not initialized for all possible
regions and contexts, but rather an entry is added to the table only
when the corresponding value is not the default one. For instance,
$F_\mathsf{a}(f,C_r)$ only contains entries $(E_R,U)$ with $E\neq D$, 
$R\neq\emptyset$, or $U\neq\set{\monneutral}$, where $D=M_0(C,f)$.

\section{Additional Details on the Implementation}\label{app:eval}

We note that \FJEUS does not feature loops, assignments, and
arrays. Instead, loops should be encoded through recursive methods. 
In the implementation, we use Soot's intraprocedural analysis to
handle loops.  It is not hard to see that this gives the exact same
result as if we would have introduced recursively defined methods for
loops.
Also, we treat assignments of local variables (but not of fields and
array elements) as strong updates, that is, when analyzing an
assignment $\Assign{x}{e}$, the type of $x$ becomes $\tau'$, where
$\tau'$ is the type of $e$.
Finally, arrays are treated by extending refined types $C_R$ to types
$C[\,]_R$, with the expected rules: an array access~$a[i]$ has type
$C_R$ if $a$ has type~$C_R$, and, following an array update
$a[i] := e$, the type of $a$ becomes $D[\,]_{R'}$ with
$D_{R'} = C_R\sqcup\tau'$, where $\tau'$ is the type of $e$. That is,
all array elements have the same refined type. Our mock code for Java
collections results in a similar behavior.

The implementation does not support reflection, two-dimensional
arrays, concurrent features, and character-level string
operations. This means that no guarantees are provided for programs
that use the non-supported features.  However, our analysis is sound
on the fragment we can handle, though not complete. That is, we may
obtain false positives, but never obtain false negatives.

The implementation uses the following type system parameters.  
We let $\Contexts=\bigcup_{i\in\{0,\dots,k\}}\Sites^i$ and
$\Regions=\Contexts\times\Sites$, where $k\geq 0$ is a parameter of
the analysis.
We take $\psi(z,i)=(z,i)$ and $\phi(z,C,r,m,i) = (i :: z)|_{k}$,
where $L|_{k}$ is the truncation of the list $L$ after the first $k$
elements.
Therefore the context of a method call is the call string (i.e. an
abstraction of the call stack) restricted to the last $k$ calls; this
corresponds the common $k$-CFA abstraction principle. Objects are
distinguished their allocation site and by the current context.


\section{Additional Examples}

\subsection{FJ\textsubscript{sec} Example}
\label{app:fjsec}

The following example illustrates that having the set
$\TPool_\mathsf{d}\subsetneq\ATypes$ improve the precision of the analysis. The
example is inspired from~\cite{DBLP:journals/lisp/Skalka08}. There, to
treat the corresponding example, ``soft subtyping'' has been used.

\begin{lstlisting}[language=java]
class Input {
  String get() { return "nontainted string";	} }

class InputExt extends Input {
  String get() { return getString(); } }

public class FJsec {	
  void m_ok() {
    Input i = new Input();
    String s = i.get();
    putString(s);
  }
  
  void m_bad() {
    Input i = new InputExt();
    String s = i.get();
    putString(s);
  }
}
\end{lstlisting}

\subsection{The SecuriBench test case Aliasing3}
\label{app:aliasing3}

The following code constitutes the Aliasing3 test case from the
SecuriBench benchmark, with minor editing for readability.

\begin{lstlisting}[language=java]
void doGet(HttpServletRequest req, HttpServletResponse resp) throws IOException {
  String name = req.getParameter("name");
  String[] a = new String[10];
  String str = a[5];
  a[5] = name;
  name = str;
              
  PrintWriter writer = resp.getWriter();
  writer.println(str);                        /* BAD */
}
\end{lstlisting}

We believe that the test case is wrongly marked as
problematic. Indeed, the value of \ls{str} when given to the
\ls{println} method is $\NullVal$ and this does not constitute a
policy violation.


\ignore{
\subsection{Mockup Code}

\begin{lstlisting}[language=java]
void doGet(HttpServletRequest req, HttpServletResponse resp) throws IOException {
  String s1 = req.getParameter("name");    
  LinkedList<String> list = new LinkedList<String>();
  list.addLast(s1);
  String s2 = (String) list.getLast();
  PrintWriter writer = resp.getWriter();  
  writer.println(s2);                    /* BAD */
}
\end{lstlisting}

\begin{lstlisting}[language=java]
void doGet(HttpServletRequest r1, HttpServletResponse r2) throws java.io.IOException {
  java.lang.String r3, r5;
  ourlib.LinkedList r4, r7;
  java.io.PrintWriter r6;
  java.lang.Object r8;

  r3 = r1.getParameter("name");
  r7 = new LinkedList;
  r7.<init>();
  r4 = r7;
  r4.addLast(r3);
  r8 = r4.getLast();
  r5 = (String) r8;
  r6 = r2.getWriter();
  r6.println(r5);
  return;
}
\end{lstlisting}
}


  





  


\subsection{securibench-micro.Strong\_updates3}
\label{app:strongupdates3}

\begin{lstlisting}[language=java]
    class Widget {
        String value = null;
    }
    protected void doGet(HttpServletRequest req, HttpServletResponse resp) throws IOException {
        String name = req.getParameter(FIELD_NAME);
        Widget w = new Widget();
        w.value = name;
        w.value = "abc";

        PrintWriter writer = resp.getWriter();
        writer.println(w.value);              /* OK */
    }
\end{lstlisting}

Here the result of our analysis is a false positive (that is, the tool
reports a potential violation of the guideline), because field typings
are always worsened, never overwritten.

\end{document}